\documentclass{lmcs}
\pdfoutput=1

\usepackage{lastpage}
\lmcsdoi{16}{4}{3}
\lmcsheading{}{\pageref{LastPage}}{}{}%
{Dec.~16,~2019}{Oct.~09,~2020}{}

\usepackage{ellipsis, mparhack, ragged2e} 
\usepackage[l2tabu, orthodox]{nag} 

\usepackage[english]{babel}
\usepackage[T1]{fontenc}
\usepackage[utf8]{inputenc}

\usepackage{amsmath, amssymb}
\usepackage[autostyle=true]{csquotes}

\PassOptionsToPackage{hyphens}{url}

\usepackage{tikz}
\usepackage{booktabs}
\usepackage{tabu,multirow}
\newcolumntype{L}{X}
\newcolumntype{R}{>{\raggedleft\arraybackslash}X}
\newcolumntype{C}{>{\centering\arraybackslash}X}

\usepackage[textsize=tiny]{todonotes}
\setlength{\marginparwidth}{2cm}

\usepackage{algorithmicx,algorithm}
\usepackage[noend]{algpseudocode}
\usepackage{subfig}




\input{0_macros}

\begin{document}

\title[Of Cores]{Of Cores: A Partial-Exploration Framework for Markov Decision Processes}
\titlecomment{This work is an extended version of \cite{DBLP:conf/concur/KretinskyM19}, including all proofs together with further explanations and examples.
Proofs previously found in the appendix of the technical report (\texttt{v1} of this document) can be found directly after the respective Theorems~\ref{stm:core_characterization} and \ref{stm:np_complete}.
The latter proof has been corrected.
An invalid claim of \cite{DBLP:conf/concur/KretinskyM19} is corrected in Remark~\ref{rem:core_characterization_error} and the experimental evaluation is significantly extended in Section~\ref{sec:appendix:results}.
We thank the anonymous reviewers and Simon~Jantsch for their valuable feedback and spotting mistakes in the earlier versions of the manuscript.
This work has been partially supported by the Czech Science Foundation grant No.~18-11193S and the German Research Foundation (DFG) project~383882557 \enquote{Statistical Unbounded Verification} (KR~4890/2-1).}

\author[J. K{\v{r}}et\'insk\'y]{Jan K{\v{r}}et\'insk\'y}
\address{Technical University of Munich, Germany}
\email{jan.kretinsky@in.tum.de}

\author[T. Meggendorfer]{Tobias Meggendorfer}
\address{Technical University of Munich, Germany}
\email{tobias.meggendorfer@in.tum.de}

\begin{abstract}
We introduce a framework for approximate analysis of Markov decision processes (MDP) with bounded-, unbounded-, and infinite-horizon properties.
The main idea is to identify a \emph{core} of an MDP, i.e., a subsystem where we provably remain with high probability, and to avoid computation on the less relevant rest of the state space.
Although we identify the core using simulations and statistical techniques, it allows for rigorous error bounds in the analysis.
We obtain efficient analysis algorithms based on partial exploration for various settings, including the challenging case of strongly connected systems.
\end{abstract}

\maketitle
\section{Introduction} \label{sec:introduction}

Markov decision processes (MDP) are an established formalism for modelling, analysis, and optimization of probabilistic systems with non-determinism, with a large range of application domains \cite{puterman,white1993survey,white1988further}.
Classical objectives such as reachability of a~given state or the long-run average reward (mean payoff) can be solved by a variety of approaches.
In theory, the most suitable approach is linear programming as it provides exact answers (rational numbers with no representation imprecision) in polynomial time.
However, in practice for systems with more than a few thousand states, linear programming is not very usable, see, e.g., \cite{cav17}.
As an alternative, one can apply dynamic programming, typically value iteration (VI) \cite{Bellman1957}, the default method in the probabilistic model checkers PRISM \cite{PRISM} and Storm \cite{Storm}.

Despite better practical scalability of VI, systems with more than a few million states still remain out of reach of the analysis not only because of time-outs, but now also memory-outs, see, e.g., \cite{atva14}.
Surprisingly, the standard VI also suffered from a fundamental correctness issue, where convergence was only guaranteed in the limit, without a proper \emph{stopping criterion}.
Only recently, an error bound (and thus a stopping criterion) was given independently in \cite{haddad2014reachability,atva14}.
The error bound was derived from the under- and (newly obtained) over-approximations converging to the true value.
This resulted not only in error bounds on VI, but opened the door to error bounds for other techniques, including those where even convergence is not guaranteed.
For instance, while VI iteratively approximates the value of all states, the above-mentioned \emph{asynchronous VI} evaluates states at different paces.
Thus convergence is often unclear and even the rate of convergence is unknown and very hard to analyze.
However, augmenting asynchronous VI with this error bound immediately provides a correct algorithm.
A prime example is the modification of BRTDP \cite{DBLP:conf/icml/McMahanLG05} to reachability \cite{atva14} with error bounds.
These ideas are further developed for, e.g., settings with long-run average reward \cite{cav17}, continuous time \cite{atva18}, or stochastic games \cite{DBLP:conf/cav/KelmendiKKW18}.

While these solutions are efficient, they are ad-hoc, implicitly sharing the idea of \emph{simulation / learning-based partial exploration} of the system. 
In this paper, we build the foundations for designing such frameworks and provide a new perspective on these approaches, leading to algorithms for settings where previous ideas cannot apply.

In essence, the previous algorithms use (i)~simulations to explore the state space and (ii)~heuristics to analyze their experience and to bias further simulations to areas that seem more relevant for the analysis of the given property (e.g., reaching a state $s_{42}$), where (iii)~the exact VI computation takes place and yields results with a guaranteed error bound.
In contrast, this paper identifies a general concept of a \enquote{\emph{core}} of the MDP, which is independent of the particular objective (which state to reach) and, to a certain extent, even of the type of property (reachability, mean payoff, linear temporal logic formulae, etc.).
This core intuitively consists of states that are important for the analysis of the MDP, whereas the remaining parts of the state space can affect the result only negligibly.
To this end, the defining property of a core is that the system stays within the core with high probability.

There are several advantages of cores, compared to the tailored techniques.
Since the core is agnostic of any particular property, it can be \emph{re-used} for multiple queries.
Thus, the repetitive effort spent by the simulations and heuristics to explore the relevant parts of the state space by the previous algorithms can be saved.
Moreover, the general concept of cores provides a \emph{unified} understanding of the previous algorithms and allows for easier development of further partial-exploration techniques within this framework.
Additionally, identifying the core can serve to better \emph{understand} the typical behaviour of the system.
The core potentially is a lot smaller than the whole system (and thus more amenable to understand) and only contains the more likely behaviours, even for real-world models, as shown in the experimental evaluation.
In other words, the core comprises only \emph{important} states of the system.
This underlying idea of cores is not bound to MDP in any way and can be extended naturally to a broad variety of probabilistic formalisms, such as probabilistic programs, evolutionary games, and many more, providing a unified notion of importance across all of these areas.
Altogether, this motivates us to investigate this notion \emph{eo ipso}.

Moreover, in the case of MDP, making the notion of core explicit leads us to identify a new standpoint and approach for the more complicated case of strongly connected systems, where the previous algorithms as well as cores cannot help.
In technical terms, minimal cores are closed under so called \emph{end components} (parts of the state space in which the system may remain forever).
Consequently, the minimal core for a system which consists of a single end component is the whole system.
And indeed, it is impossible to give guarantees on infinite-horizon behaviour whenever a single state is ignored.
In order to provide any kind of feasible analysis for this case, we introduce the $n$-step core.
It is defined by the system staying there with high probability for $n$ steps.
This $n$-step core can naturally be used for analysis of bounded, $n$-step horizon properties.
However, by explicitly viewing the core as a set of states we are able to derive the notion of \enquote{stability} of a core.
This stability essentially describes the tendency of the probability to leave this core if longer and longer runs are considered.
We shall argue that this yields (i)~rigorous bounds for $N$-step analysis for $N \gg n$ more efficiently than a classical, direct $N$-step analysis on appropriately shaped models, and (ii)~finer information on the \enquote{long run} behaviour (for different lengths) than the summary for the infinite run, which, n.b., never occurs in reality.
This opens the door towards a rigorous analysis of \enquote{typical} behaviour of the system, with many possible applications in the design and interpretation of complex systems.

Our contribution can be summarized as follows:
\begin{itemize}
	\item
	We introduce the notion of core, study its basic properties, in its light re-interpret previous results in a unified way, and discuss its advantages.

	\item
	We stipulate a new view on long-run properties as rather corresponding to long runs than an infinite one.
	Then a modified version of cores allows for an efficient analysis of strongly connected systems, where other partial-exploration techniques necessarily fail.

	\item
	We show how these modified cores can aid in design and interpretation of systems.

	\item
	We explain how this notion can be transferred to other properties and models.

	\item
	We provide efficient, learning-based algorithms for computing both types of cores and evaluate them on several examples.
\end{itemize}
\subsection{Related Work}
Since the notion of core is fundamentally novel, we list works related to two areas of our contributions, namely (i)~to speed up (reachability) analysis of MDP and (ii)~algorithms to efficiently find small cores in practice.
Note that the former point is not a primary goal of our work, only an immediate and useful consequence.

In order to improve value iteration, several approaches are considered.
\cite{DBLP:journals/corr/abs-1910-01100} employs the idea of \emph{optimistic} value iteration, essentially guessing and verifying upper bounds, saving computational effort.
In \cite{DBLP:conf/cav/QuatmannK18}, the authors approximate the exit probability of a particular set to bound the error on the computed reachability, potentially allowing for early termination.
Note that this may seem similar to our idea, however the authors consider a fixed, pre-computed set, relative to a particular reachability query, while our approach seeks to find a set of states only dependent on the model itself.

Another natural idea is to apply state space reduction heuristics.
This includes abstraction approaches, e.g., \cite{DBLP:conf/papm/DArgenioJJL02,DBLP:conf/tacas/HahnHWZ10}, or a dual approach based on restricting the analysis to a part of the state space.
Examples of the latter are asynchronous VI in probabilistic planning, e.g., \cite{DBLP:conf/icml/McMahanLG05}, or projections in approximate dynamic programming, e.g., \cite{adp}.
In both, only a certain subset of states is considered for analysis, leading to speed ups in orders of magnitude.
However, these are best-effort solutions, which can only guarantee convergence to the true result in the limit, with no error bounds at any time.

Based on \cite{DBLP:conf/icml/McMahanLG05}, \cite{atva14} additionally provides an error bounds while only exploring a subset of states.
The approach of \cite{atva14} inspired our work and thus naturally is closely related.
In particular, our experimental evaluation shows that the approach of \cite{atva14} explores a core for almost all practical examples.
However, our fundamental goal is different:
While \cite{atva14} aims to answer a given query, we instead provide an analysis of a \emph{system}.

Algorithmically, our approach to find small cores is related to the ideas of \cite{haddad2014reachability,atva14}.
Both works maintain bounds for each state, iterating an operator similar to classical value iteration, and collapse end components to ensure convergence.
However, \cite{haddad2014reachability} constructs the whole system, in contrast to our sampling-based approach.
Our algorithms are structurally close to the BRTDP algorithm of \cite{atva14}.
We also use similar ideas to focus computation on promising areas by the means of a guided sampling approach.
However, we again emphasize that the overall goal is fundamentally different.

\section{Preliminaries} \label{sec:preliminaries}

In this section, we recall basics of probabilistic systems and set up the notation.
We assume familiarity with central ideas of measure theory.
As usual, $\Naturals$ and $\Reals$ refers to the (positive) natural numbers and real numbers, respectively.
For any set $S$, we use $\overline{S}$ to denote its complement.
A \emph{probability distribution} on a finite set $X$ is a mapping $\distribution : X \to [0,1]$, such that $\sum_{x\in X} \distribution(x) = 1$.
Its \emph{support} is denoted by $\support(\distribution) = \set{x \in X \mid \distribution(x) > 0}$.
$\Distributions(X)$ denotes the set of all probability distributions on $X$.
An event happens \emph{almost surely} (a.s.) if it happens with probability $1$.

\begin{defi}
	A \emph{Markov chain (MC)} is a tuple $\MC = (\States, \initialstate, \mctransitions)$, where
	\begin{itemize}
		\item $\States$ is a countable set of \emph{states},
		\item $\initialstate \in \States$ is the \emph{initial} state, and
		\item $\mctransitions : \States \to \Distributions(\States)$ is a \emph{transition function} that for each state $s$ yields a probability distribution over successor states.
	\end{itemize}
\end{defi}

\begin{defi}
	A \emph{Markov decision process (MDP)} is a tuple $\MDP = (\States, \initialstate, \Actions, \stateactions, \transitions)$, where
	\begin{itemize}
		\item $\States$ is a finite set of \emph{states},
		\item $\initialstate \in \States$ is the \emph{initial} state,
		\item $\Actions$ is a finite set of \emph{actions},
		\item $\stateactions: \States \to 2^{\Actions} \setminus \{\emptyset\}$ assigns to every state a non-empty set of \emph{available actions}, and
		\item $\transitions: \States \times \Actions \to \Distributions(\States)$ is a \emph{transition function} that for each state $s$ and (available) action $a \in \stateactions(s)$ yields a probability distribution over successor states.
	\end{itemize}
	We assume w.l.o.g.\ that actions are unique for each state, i.e.\ $\stateactions(s) \intersection \stateactions(s') = \emptyset$ for $s \neq s'$.
	This can be achieved by replacing $\Actions$ with $\States \times \Actions$ and adapting $\stateactions$ and $\transitions$ appropriately.
\end{defi}
For ease of notation, we overload functions mapping to distributions $f: Y \to \Distributions(X)$ by $f: Y \times X \to [0, 1]$, where $f(y, x) := f(y)(x)$.
For example, instead of $\mctransitions(s)(s')$ and $\transitions(s, a)(s')$ we write $\mctransitions(s, s')$ and $\transitions(s, a, s')$, respectively.

\begin{rem}
	In some works, Markov chains and decision processes are defined without an initial state, which instead is given as part of the query.
	While natural for some problems, our notion of cores is fundamentally dependent on the initial state.
\end{rem}

\subsection{Paths}
An \emph{infinite path} $\infinitepath$ in a Markov chain is an infinite sequence $\infinitepath = s_0 s_1 \cdots \in \States^\omega$, such that for every $i \in \Naturals$ we have that $\mctransitions(s_i, s_{i+1}) > 0$.
A \emph{finite path} (or \emph{history}) $\finitepath = s_0 s_1 \dots s_n \in \States^*$ is a finite prefix of an infinite path.
Similarly, an \emph{infinite path} in an MDP is some infinite sequence $\infinitepath = s_0 a_0 s_1 a_1 \dots \in (\States \times \Actions)^\omega$, such that for every $i \in \Naturals$, $a_i \in \stateactions(s_i)$ and $\transitions(s_i,a_i, s_{i+1}) > 0$.
\emph{Finite path}s $\finitepath$ are defined analogously as elements of $(\States \times \Actions)^* \times \States$.
We use $\infinitepath_i$ and $\finitepath_i$ to refer to the $i$-th state in the given (in)finite path.
In the following, we slightly abuse notation by identifying $(\States \times \Actions)^\omega$ and $(\States \times \Actions)^* \times \States$ with the set of infinite and finite paths, respectively.

\subsection{Strategies}
A \emph{strategy} on an MDP is a function $\strategy: (\States \times \Actions)^*\times S \to \Distributions(\Actions)$, which given a finite path $\finitepath = s_0 a_0 s_1 a_1 \dots s_n$ yields a probability distribution $\strategy(\finitepath) \in \Distributions(\stateactions(s_n))$ on the actions to be taken next.
We call a strategy \emph{memoryless randomized} (or \emph{stationary}) if it is of the form $\strategy: \States \to \Distributions(\Actions)$, and \emph{memoryless deterministic} (or \emph{positional}) if it is of the form $\strategy: \States \to \Actions$.
We denote the set of all strategies of an MDP by $\Strategies$, and the set of all memoryless deterministic strategies as $\StrategiesMD$.
Note that $\StrategiesMD$ is finite, since at each of the finitely many states there exist only finitely many actions to choose from.

Fixing any strategy $\strategy$ induces a Markov chain $\MDP^\strategy = (\States^\strategy, \initialstate^\strategy, \mctransitions^\strategy)$, where the states are given by $\States^\strategy = (\States \times \Actions)^*\times S$ and, for some state $\finitepath = s_0 a_0 \dots s_n \in \States^\strategy$, the successor distribution is defined as $\mctransitions^\strategy(\finitepath, \finitepath a_{n+1} s_{n+1}) = \strategy(\finitepath, a_{n+1}) \cdot \transitions(s_n, a_{n+1}, s_{n+1})$.

\subsection{Measures}
Any Markov chain $\MC$ induces a unique measure $\Probability<\MC>$ over infinite paths \cite[p.~758]{BaierBook}.
Assuming we fixed some MDP $\MDP$, we use $\ProbabilityStrat<\strategy><s>$ to refer to the probability measure induced by the Markov chain $\MDP^\strategy$ with initial state $s$.
See \cite[Sec.~2.1.6]{puterman} for further details.
Whenever $\strategy$ or $s$ are clear from the context, we may omit them, in particular, $\ProbabilityStrat<\strategy><>$ refers to $\ProbabilityStrat<\strategy><\initialstate>$.
For a given MDP $\MDP$ and measurable event $A$, we use the shorthand $\ProbabilityMax[A] := \sup_{\strategy \in \Strategies} \ProbabilityStrat[A]$ and $\ProbabilityMax<s>[A] := \sup_{\strategy \in \Strategies} \ProbabilityStrat<\strategy><s>[A]$ to refer to the maximal probability of $A$ over all strategies (starting in $s$).
Analogously, $\ProbabilityMin[A]$ and $\ProbabilityMin<s>[A]$ refer to the respective minimal probabilities.

Note that in general the supremum or infimum may not be obtained, depending on the structure of $A$.
However, we only consider events where an optimal strategy always exists, hence we use the superscripts $\max$ and $\min$ for emphasis.

\subsection{End components}
A non-empty set of states $C \subseteq \States$ in a Markov chain is \emph{strongly connected} if for every pair $s, s' \in C$ there is a non-trivial path from $s$ to $s'$. 
Such a set $C$ is a \emph{strongly connected component} (SCC) if it is maximal w.r.t.\ set inclusion, i.e.\ there exists no strongly connected $C'$ with $C \subsetneq C'$.
The set of SCCs 
in an MC $\MC$ is denoted by $\Sccs(\MC)$. 

The concept of SCCs is generalized to MDPs by so called \emph{(maximal) end components}.
A pair $(T, B)$, where $\emptyset \neq T \subseteq \States$ and $\emptyset \neq B \subseteq \Union_{s \in T} \stateactions(s)$, is an \emph{end component} of an MDP $\MDP$ if (i)~for all $s \in T, a \in B \intersection \stateactions(s)$ we have $\support(\transitions(s, a)) \subseteq T$, and (ii)~for all $s, s' \in T$ there is a finite path $\finitepath = s a_0 \dots a_n s' \in (T \times B)^* \times T$, i.e.\ the path stays inside $T$ and only uses actions in $B$.
Intuitively, an end component describes a set of states for which a particular strategy exists such that all possible paths remain inside these states.
By abuse of notation, we identify an end component with the respective set of states, e.g., $s \in E = (T, B)$ means $s \in T$.
An end component $(T, B)$ is a \emph{maximal end component (MEC)} if there is no other end component $(T', B')$ such that $T \subseteq T'$ and $B \subseteq B'$.
The set of MECs of an MDP $\MDP$ is denoted by $\Mecs(\MDP)$.
%
\begin{rem} \label{rem:scc_and_mec_decomposition}
	For a Markov chain $\MC$, the computation of $\Sccs(\MC)$ 
	and a topological ordering of the SCCs can be achieved in linear time w.r.t.\ the number of states and transitions by, e.g., Tarjan's algorithm~\cite{tarjan1972depth}.
	Similarly, the MEC decomposition $\Mecs(\MDP)$ of an MDP can be computed in polynomial time \cite{CY95}.
	See~\cite{CH11,ChatterjeeH12,ChatterjeeH14} for improved algorithms on general MDP and various special cases.
\end{rem}
\subsection{Objectives}
In the following, we primarily deal with unbounded and bounded variants of \emph{reachability} queries.
Essentially, for a given MDP and set of states, the task is to determine the maximal probability of reaching them, potentially within a certain number of steps.
Technically, we are interested in determining $\ProbabilityMax[\reach T]$ and $\ProbabilityMax[\reach<n> T]$, where $T$ is the set of target states and $\reach T$ ($\reach<n> T$) refers to the measurable set of runs that visit $T$ at least once (in the first $n$ steps).
The dual operators $\always T$ and $\always<n> T$ refer to the set of runs which remain inside $T$ forever or for the first $n$ steps, respectively.
See \cite[Sec.~10.1.1]{BaierBook} for further details, e.g., proofs of measurability.

Our techniques are easily extendable to other related objectives like \emph{long run average reward} (\emph{mean payoff}) \cite{puterman}, \emph{LTL formulae}, and $\omega$-regular objectives \cite{BaierBook}, or more general systems like \emph{stochastic games}.
We comment on these extensions in Section~\ref{sec:cores:extensions}.
Some of these require further knowledge about the model, which we also explain there.

\subsection{Approximate Solutions}
We are interested in finding approximate solutions efficiently, or, in other words, trading precision for speed of computation.
In our case, \enquote{approximate} means $\varepsilon$-optimal for some given precision $\varepsilon > 0$, i.e.\ the value we determine has a (guaranteed) absolute error of less than $\varepsilon$.
For example, given a reachability query $\ProbabilityMax[\reach T]$ and precision $\varepsilon$, we are interested in finding a value $v$ with $\abs{\ProbabilityMax[\reach T] - v} < \varepsilon$.

\section{The Core Idea} \label{sec:infinite_cores}

In this section, we present the novel concept of \emph{cores}, inspired by the approach of \cite{atva14}, where a specific reachability query was answered approximately through heuristic based methods.
We first establish a running example to motivate our work and explain the difference to previous approaches.

Consider the flight of an airplane. 
The plane is controlled by the pilot and the flight computer.
Together, they can take many decisions to control the plane depending on the current state.
Naturally, one may be interested in the maximal probability of arriving at the destination. 
This intuitively describes how likely it is to arrive safely, assuming that the pilot acts optimally and the computer is bug-free.
This probability may be less than $100\%$.
For example, some components may fail even under optimal conditions. 
See Figure~\ref{fig:example_full} for a simplified MDP modelling this example.

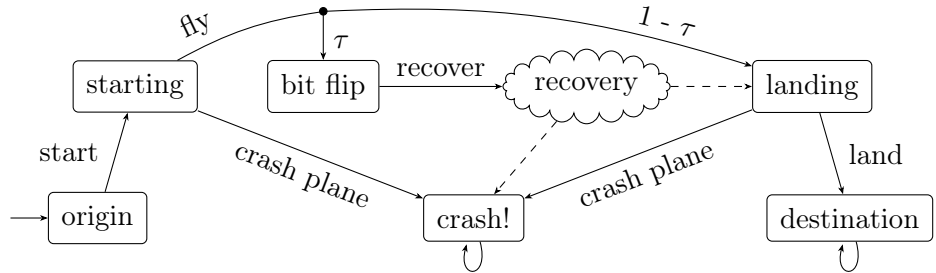
\begin{figure}[t]
	\centering
	\begin{tikzpicture}[auto,initial text=]
		\node[state,initial] at (0,0.25) (muc) {origin};
		\node[state] at (0.5,2) (start) {starting};
		\node[state] at (9.5,2) (land) {landing};
		\node[state] at (10,0.25) (prague) {destination};
		\node[state] at (3,2) (error) {bit flip};
		\node[draw,cloud,inner sep=2pt,cloud puff arc=160, cloud puffs=20, aspect=3] at (6.5,2) (recovery) {recovery};
		\node[state] at (5,0.25) (crash) {crash!};
		\node[actionnode] at (3,3) (fly-node) {};

		\path[directedge]
			(muc) edge node[anchor=east,outer sep=3pt] {start} (start)
			(land) edge node[anchor=west,outer sep=3pt] {land} (prague)
			(start) edge node[swap,sloped,anchor=north] {crash plane} (crash)
			(land) edge node[sloped,anchor=north] {crash plane} (crash)
			(error) edge node {recover} (recovery)
			(recovery) edge[dashed] (crash)
			(recovery) edge[dashed] (land)
			(crash) edge[loop below] (crash)
			(prague) edge[loop below] (prague)
		;
		\path[actionedge]
			(start) edge[bend left=10,pos=0.2] node[sloped,anchor=south] {fly} (fly-node)
		;
		\path[probedge]
			(fly-node) edge[bend left=10,pos=0.8] node[sloped,anchor=south] {1 - $\tau$} (land)
			(fly-node) edge[pos=0.6] node {$\tau$} (error)
		;
	\end{tikzpicture}

	\caption{A simplified model of a flight, where $\tau = 10^{-10}$ is the probability of potentially hazardous bit flips occurring during the flight.
	The \enquote{recovery} node represents a complex recovery procedure, comprising many states.}
	\label{fig:example_full}
\end{figure}

One key observation in \cite{atva14} is that some extreme situations may be very unlikely and we can simply assume the worst or best case for them without losing too much precision.
This allows us to completely ignore these situations.
For example, consider the unlikely event of hazardous bit flips during the flight due to cosmic radiation.
This event might eventually lead to a crash or it might have no influence on the evolution of the system at all due to redundancy.
Since this event is so unlikely to occur, we can simply assume that it always leads to a crash and still get a very precise result.
Consequently, we do not need to explore the corresponding part of the state space (the \enquote{recovery} part), saving resources.

In \cite{atva14}, the state space is explored relative to a particular reachability objective, storing upper and lower bounds on each state for the objective in consideration.
We make use of the same fundamental idea, but approach it from a different perspective, agnostic of any objective.
We are interested in finding \emph{all} relevant states of the system, i.e.\ all states which are reasonably likely to be reached.
Such a set of states is an \emph{intrinsic property} of the system, and we show that this set is both sufficient and (in a particular sense) necessary to answer reachability queries $\varepsilon$-precisely.
In particular, once computed, this set can be reused for multiple queries.

\subsection{Infinite-Horizon Cores}

First, we define the notion of an \emph{$\varepsilon$-core}.
Intuitively, an $\varepsilon$-core is a set of states which can only be exited with probability less than $\varepsilon$. 
\begin{defi}[Core]
	Let $\MDP$ be an MDP and $\varepsilon > 0$.
	A set $\Core \subseteq \States$ is an \emph{$\varepsilon$-core} if $\ProbabilityMax[\reach \overline{\Core}] < \varepsilon$, i.e.\ the probability of ever exiting $\Core$ is smaller than $\varepsilon$.
\end{defi}
When $\varepsilon$ is clear from the context, we may refer to an $\varepsilon$-core by \enquote{core}.
Observe that the core condition is equivalent to $\ProbabilityMin[\always \Core] \geq 1 - \varepsilon$, i.e.\ the probability to remain inside the core forever is large.
We highlight that the set of reachable states is a \enquote{0-core}.
As such, cores intuitively extend the straightforward idea of considering reachable states for analysis to considering only \emph{likely} reachable states.

In the following, we derive basic properties of cores, show how to efficiently construct them, and relate them to the approaches of \cite{cav17,atva14}.
First, we prove the key statement motivating our interest in cores, namely that they are both sufficient and, in a sense, required to compute $\varepsilon$-precise reachability queries.
\begin{thm} \label{stm:core_characterization}
	Let $\MDP$ be an MDP and $\varepsilon > 0$.
	A set $\Core\subseteq \States$ is an $\varepsilon$-core of $\MDP$ if and only if for every subset of states $R \subseteq \States$ we have that $0 \leq \ProbabilityMax[\reach R] - \ProbabilityMax[\reach (R \intersection \Core) \intersection \always \Core] < \varepsilon$.
\end{thm}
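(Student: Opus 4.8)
The plan is to prove the two implications separately. The \enquote{only if} direction carries essentially all the work, while the converse follows from a single well-chosen instance. Throughout I fix an arbitrary $R \subseteq \States$ and abbreviate $p_1 = \ProbabilityMax[\reach R]$ and $p_2 = \ProbabilityMax[\reach (R \intersection \Core) \intersection \always \Core]$. The lower inequality $p_1 \geq p_2$ (i.e.\ the $0 \leq$ part) is immediate from event inclusion: since $R \intersection \Core \subseteq R$ we have $\reach (R \intersection \Core) \intersection \always \Core \subseteq \reach R$, so $\ProbabilityStrat[\reach (R \intersection \Core) \intersection \always \Core] \leq \ProbabilityStrat[\reach R] \leq p_1$ for every strategy $\strategy$, and taking the supremum gives $p_2 \leq p_1$. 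The real content is the bound $p_1 - p_2 < \varepsilon$, and the key observation I would establish first is the event identity $\reach R \intersection \always \Core = \reach (R \intersection \Core) \intersection \always \Core$: on any path that never leaves $\Core$ every visited state lies in $\Core$, so visiting $R$ is the same as visiting $R \intersection \Core$.

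With that identity in hand, the \enquote{only if} bound is a single decomposition. For any strategy $\strategy$ the events $\always \Core$ and $\reach \overline{\Core}$ are complementary, so $\reach R$ splits as a disjoint union and
\[
	\ProbabilityStrat[\reach R] = \ProbabilityStrat[\reach R \intersection \always \Core] + \ProbabilityStrat[\reach R \intersection \reach \overline{\Core}] \leq p_2 + \ProbabilityMax[\reach \overline{\Core}],
\]
where the first summand is bounded by $p_2$ via the event identity and the definition of $p_2$, and the second by dropping the intersection with $\reach R$ and passing to the maximising strategy. Since $\Core$ is an $\varepsilon$-core, $\ProbabilityMax[\reach \overline{\Core}] < \varepsilon$; taking the supremum over $\strategy$ on the left then yields $p_1 \leq p_2 + \ProbabilityMax[\reach \overline{\Core}] < p_2 + \varepsilon$. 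I would emphasise that this \emph{per-strategy} bound is precisely what avoids the tempting but unsound step of assuming that one strategy simultaneously optimises $p_1$ and $p_2$, which in general are attained by different strategies; it also means the argument never relies on the suprema being realised.

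For the \enquote{if} direction I would simply instantiate the assumed inequality at $R = \overline{\Core}$. Then $R \intersection \Core = \emptyset$, so $\reach (R \intersection \Core) \intersection \always \Core$ is the impossible event and $p_2 = 0$, while $\ProbabilityMax[\reach R] = \ProbabilityMax[\reach \overline{\Core}]$. The hypothesis therefore collapses to $0 \leq \ProbabilityMax[\reach \overline{\Core}] < \varepsilon$, which is exactly the defining condition of an $\varepsilon$-core, completing the equivalence.

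I expect the only genuinely delicate point to be the event-algebra reasoning, namely the identity $\reach R \intersection \always \Core = \reach (R \intersection \Core) \intersection \always \Core$ together with the complementarity of $\always \Core$ and $\reach \overline{\Core}$; once these are justified, the remainder is bookkeeping with suprema. Measurability of all events involved is inherited from the standard reachability and safety constructions cited earlier, so no additional measure-theoretic work is required.
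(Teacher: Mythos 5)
Your proof is correct and follows essentially the same route as the paper's: the same decomposition of $\reach R$ along the complementary events $\always \Core$ and $\reach \overline{\Core}$, the same event identity $\reach R \intersection \always \Core = \reach (R \intersection \Core) \intersection \always \Core$, and the same witness $R = \overline{\Core}$ for the converse (which the paper phrases contrapositively). Your per-strategy formulation before passing to suprema is a slightly more careful rendering of the paper's subadditivity step, but not a different argument.
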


\begin{proof} \label{proof:core_characterization}
	We prove both directions of the equivalence separately.

	First, let $\Core$ be a core of $\MDP$ and $R \subseteq \States$ states in $\MDP$.
	Clearly,
	\begin{equation*}
		\ProbabilityMax[\reach R] \leq \ProbabilityMax[\reach R \intersection \reach \overline{\Core}] + \ProbabilityMax[\reach R \intersection \always \Core]
	\end{equation*}
	by simple case distinction.
	Furthermore, we have that
	\begin{equation*}
		\ProbabilityMax[\reach R \intersection \always \Core] = \ProbabilityMax[\reach (R \intersection \Core) \intersection \always \Core] \leq \ProbabilityMax[\reach R]
	\end{equation*}
	and
	\begin{equation*}
		0 \leq \ProbabilityMax[\reach R \intersection \reach \overline{\Core}] \leq \ProbabilityMax[\reach \overline{\Core}] < \varepsilon.
	\end{equation*}
	Together, we obtain
	\begin{multline*}
		0 \leq \ProbabilityMax[\reach R] - \ProbabilityMax[\reach (R \intersection \Core) \intersection \always \Core] \leq \\ \ProbabilityMax[\reach R \intersection \reach \overline{\Core}] + \ProbabilityMax[\reach R \intersection \always \Core] - \ProbabilityMax[\reach (R \intersection \Core) \intersection \always \Core] < \varepsilon.
	\end{multline*}
	For the other direction, assume that $\States' \subsetneq \States$ is not a core.
	Now, pick $R = \overline{\States'}$, $R \neq \emptyset$ by assumption.
	Clearly, $R \intersection S' = \emptyset$, hence we only need to prove that $\ProbabilityMax[\reach R] > \varepsilon$.
	By definition, since $\States'$ is not a core, we have that $\ProbabilityMax[\reach \overline{\States'}] > \varepsilon$.
\end{proof}
This theorem shows that for any reachability objective $R$, we can determine $\ProbabilityMax[\reach R]$ up to $\varepsilon$ precision by determining the reachability of $R$ on the sub-model induced by any $\varepsilon$-core, i.e.\ by only considering runs which remain inside $\Core$.
Conversely, the theorem also shows that if we would consider a set of states not satisfying the core property then there is at least one reachability property that we cannot answer with epsilon-precision guarantees.

\begin{figure}
	\centering
	\begin{tikzpicture}[auto,initial text=]
		\node[state,initial above] at (0,0) (s0) {$s_0$};
		\node[actionnode] at (1.25,0) (a01) {};
		\node[actionnode] at (-1.25,0) (a02) {};
		\node[state] at (3,-0.75) (s1) {$\underline{s_1}$};
		\node[state] at (3,0.75) (s2) {$s_2$};
		\node[state] at (-3,0.75) (s3) {$s_3$};
		\node[draw,cloud,inner sep=2pt,cloud puff arc=160, cloud puffs=20, aspect=3] at (-3.6,-0.8) (cloud) {unknown};
	
		\path[directedge]
			(s1) edge[loop right] (s1)
			(s2) edge[loop right] (s2)
			(s3) edge[loop left] (s3)
		;
		\path[actionedge]
			(s0) edge node[action] {$a$} (a01)
			(s0) edge[swap] node[action] {$b$} (a02)
		;
		\path[probedge]
			(a01) edge[swap] node[prob] {$0.8$} (s1)
			(a01) edge node[prob] {$0.2$} (s2)
			(a02) edge[swap] node[prob] {$0.3$} (s3)
			(a02) edge node[prob] {$0.7$} (cloud)
		;
		\path[directedge]
			(cloud) edge[gray,anchor=center] node[prob,fill=white,inner sep=2pt] {?}(s1)
		;
	\end{tikzpicture}
	
	\caption{An MDP showing that cores are not always required to answer reachability queries $\varepsilon$-precisely.}
	\label{fig:core_larger_than_required}
\end{figure}
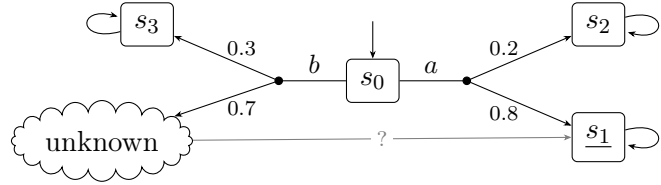
\begin{rem} \label{rem:core_characterization_error}
	In the conference paper \cite{DBLP:conf/concur/KretinskyM19} we incorrectly reported a stronger statement, claiming that cores are required to compute \emph{any} property (except some corner cases).
	This mistake was discovered independently by the authors and reviewers of this work.
	We show a counterexample to this claim in Figure~\ref{fig:core_larger_than_required}.
	Here, we can already see that the maximal probability of reaching state $s_1$ is $0.8$ by choosing action $a$ in the initial state $s_0$, independent of the system's behaviour in the \enquote{unknown} area, as we explain below.
	However, in order to obtain an $\varepsilon$-core for any $\varepsilon < 0.7$, we would need to explore further.

	Now, we explain the example of Figure~\ref{fig:core_larger_than_required} in more detail.
	Action $b$ immediately leads us to state $s_3$ with $0.3$ probability, a MEC with neither outgoing edges nor a target state.
	The probability of reaching the target $s_1$ after choosing action $b$ thus is at most $0.7$, independent of the probability of reaching $s_1$ from the unknown region, indicated by the grey arrow.
	In other words, one can derive the upper bound $0.7$ on the probability of reaching the target after taking action $b$ without investigating the unknown area.
	Dually, following action $a$ yields a lower bound of $0.8$, hence it is clear that the probability of reaching the target is at least $0.8$.
	Moreover, since the remaining probability of $0.2$ after taking action $a$ also leads to such an \enquote{absorbing} MEC, we can conclude that the maximal probability of reaching $s_1$ is $0.8$.
	Note that if instead of $s_2$ there would be another unknown region, we would need to explore it, since it may also eventually lead to the target set.

	We emphasize that the counterexample relies on this particular structure of the MDP relative to the reachability objective, i.e.\ that there is a \enquote{shortcut} to the goal as offered by action $a$ together with an immediate dead-end associated with all other actions (see action $b$ in this example).
	In our experiments, we only rarely observed such a structure.
\end{rem}
Theorem~\ref{stm:core_characterization} motivates us to find cores.
Of course, one could simply construct the whole state set $\States$, since it is a core for any $\varepsilon$.
Note that, in a sense, this is what traditional explicit methods are doing.
However, this clearly does not yield any computational advantages.
Thus, we naturally are interested in finding a core which is as small as possible, which we call a \emph{minimal core}.
\begin{defi}[Minimal Core]
	Let $\MDP$ be an MDP and $\varepsilon > 0$.
	$\MinimalCore \subseteq \States$ is a \emph{minimal $\varepsilon$-core} if it is minimal w.r.t.\ set inclusion, i.e.\ $\MinimalCore$ is an $\varepsilon$-core and there exists no $\varepsilon$-core $\Core' \subsetneq \MinimalCore$.
\end{defi}
When $\varepsilon$ is clear from the context, we may refer to a minimal $\varepsilon$-core by \enquote{minimal core}.
In the running example, a minimal core for $\varepsilon = 10^{-6}$ would contain all states except the \enquote{bit flipped} state and the \enquote{recovery} subsystem, as they are reached with probability $\tau \ll \varepsilon$.

Unfortunately, finding small cores for $\varepsilon > 0$ is computationally quite expensive, as we show in the following.
We first prove that there may be several minimal cores for one system.
While this statement is rather obvious, we include it due to the instructiveness of its proof, hinting at the underlying combinatorics we use in the following proof.
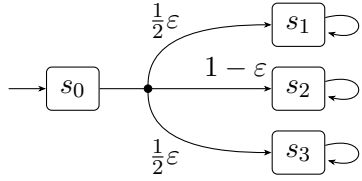
\begin{figure}[t]
	\centering
	\begin{tikzpicture}[auto,initial text=]
	\node[state,initial] at (0,0) (s0) {$s_0$};
	\node[actionnode] at (1,0) (action) {};
	\node[state] at (3,0.85) (s1) {$s_1$};
	\node[state] at (3,0) (s2) {$s_2$};
	\node[state] at (3,-0.85) (s3) {$s_3$};
	
	\path[directedge]
	(s1) edge[loop right] (s1)
	(s2) edge[loop right] (s3)
	(s3) edge[loop right] (s3)
	;
	\path[actionedge]
	(s0) edge (action)
	;
	\path[probedge]
	(action) edge[out=90,in=180,pos=0.3] node[anchor=south] {$\tfrac{1}{2} \varepsilon$} (s1)
	(action) edge[pos=0.7] node {$1 - \varepsilon$} (s2)
	(action) edge[out=-90,in=180,pos=0.3] node[anchor=north] {$\tfrac{1}{2} \varepsilon$} (s3)
	;
	\end{tikzpicture}
	
	\caption{A simple MDP showing that minimal cores are not unique.}
	\label{fig:core_non_unique}
\end{figure}
\begin{prop}[Non-uniqueness] \label{stm:non_unique}
	There is an MDP with minimal cores $\MinimalCore, \MinimalCore'$ satisfying $\MinimalCore \neq \MinimalCore'$ for any $0 < \varepsilon < \frac{1}{2}$.
\end{prop}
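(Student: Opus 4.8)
The plan is to verify directly that the MDP depicted in Figure~\ref{fig:core_non_unique} has two distinct minimal $\varepsilon$-cores for every $0 < \varepsilon < \frac{1}{2}$. The construction is engineered so that the analysis is essentially combinatorial: the initial state $s_0$ offers a single action, and each of $s_1, s_2, s_3$ is absorbing (it has only a self-loop), so there is effectively a unique strategy and the process behaves like a Markov chain. In particular, for any set $\Core \ni s_0$, once a run leaves $s_0$ it is trapped in whichever $s_i$ it entered; hence the escape probability $\ProbabilityMax[\reach \overline{\Core}]$ is simply the sum of the one-step probabilities of those $s_i \notin \Core$. I would state this reduction first, since it trivializes every subsequent probability computation.

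Next I would pin down which states every $\varepsilon$-core must contain. Any core must contain $s_0$, for otherwise the initial state already lies in $\overline{\Core}$ and the escape probability is $1 \geq \varepsilon$. It must also contain $s_2$: omitting $s_2$ contributes $1 - \varepsilon$ to the escape probability, and since $\varepsilon < \frac{1}{2}$ we have $1 - \varepsilon > \frac{1}{2} > \varepsilon$, violating the core condition. Thus every core includes $\{s_0, s_2\}$ and can differ only in how it treats the symmetric pair $s_1, s_3$. I would then exhibit the two witnesses $\MinimalCore = \{s_0, s_2, s_1\}$ and $\MinimalCore' = \{s_0, s_2, s_3\}$: each omits exactly one of $s_1, s_3$, so its escape probability equals $\frac{1}{2}\varepsilon < \varepsilon$, confirming that both are $\varepsilon$-cores, and they are plainly distinct.

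For minimality, the reduction above lets me restrict attention sharply: any core strictly contained in $\MinimalCore$ (or $\MinimalCore'$) must still contain $\{s_0, s_2\}$, so the only candidate is $\{s_0, s_2\}$ itself. Omitting both $s_1$ and $s_3$ yields escape probability $\frac{1}{2}\varepsilon + \frac{1}{2}\varepsilon = \varepsilon$, which is \emph{not} strictly less than $\varepsilon$; hence $\{s_0, s_2\}$ is not a core and both witnesses are minimal. The one point requiring genuine care — and the only place the argument is not purely mechanical — is exactly this tie case: the definition of a core uses a strict inequality, and the escape mass is split symmetrically so that each of $s_1, s_3$ can be dropped individually ($\frac{1}{2}\varepsilon < \varepsilon$ is tolerable) but not both together ($\varepsilon \not< \varepsilon$). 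This forces a genuine binary choice between $s_1$ and $s_3$ and thereby produces two incomparable minimal cores, while the hypothesis $\varepsilon < \frac{1}{2}$ serves only to make $s_2$ mandatory so that the non-uniqueness is localized to that symmetric pair.
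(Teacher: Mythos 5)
Your proof is correct and follows essentially the same route as the paper: it uses the very MDP of Figure~\ref{fig:core_non_unique}, argues that $s_0$ and $s_2$ are forced into every core while $\{s_0,s_2\}$ itself fails the (strict) core inequality with escape probability exactly $\varepsilon$, and concludes that $\{s_0,s_1,s_2\}$ and $\{s_0,s_2,s_3\}$ are two distinct minimal cores. You merely spell out more explicitly the points the paper leaves as \enquote{easy to verify} (the reduction to a Markov chain, the role of $\varepsilon<\tfrac{1}{2}$, and the minimality check), all of which are accurate.
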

\begin{proof}
	Consider the MDP shown in Figure~\ref{fig:core_non_unique}.
	Any $\varepsilon$-core contains the states $s_0$ and $s_2$.
	But $\{s_0, s_2\}$ is not a valid core, since $\ProbabilityMax[\reach \overline{\{s_0, s_2\}}] = \ProbabilityMax[\reach \{s_1, s_3\}] = \varepsilon$.
	Hence, at least one of $s_1$ and $s_3$ has to be part of a core.
	It is easy to verify that both $\{s_0, s_1, s_2\}$ and $\{s_0, s_2, s_3\}$ are (minimal) cores.
\end{proof}
By extending the above example we can show that there indeed might be exponentially many minimal cores.
More importantly, we observe that finding a core of a given size (for a non-trivial $\varepsilon$) is NP-complete.
\begin{figure}[t]
%
%
%
%
%
%
%
%
%
%
	\centering
	\begin{tikzpicture}[auto,initial text=]
		\node[state,initial above] at (0,-0.5) (s0) {$s_0$};

		\node[state] at (-3,-2) (e1) {$\{v_1, v_2\}$};
		\node[state] at (-1,-2) (e2) {$\{v_2, v_4\}$};
		\node at (1,-2) (edots) {$\dots$};
		\node[state] at (3,-2) (em) {$\{v_3, v_n\}$};

		\node[state] at (-2.5,-4) (v1) {$v_1$};
		\node[state] at (-1,-4) (v2) {$v_2$};
		\node at (1,-4) (vdots) {$\dots$};
		\node[state] at (2.5,-4) (vn) {$v_n$};

		\node[state] at (0,-5.5) (sink) {$s_-$};

		\node[actionnode] (e1a) at (-3,-2.75) {};
		\node[actionnode] (e2a) at (-1,-2.75) {};
		\node[actionnode] (edotsa) at (1,-2.75) {};
		\node[actionnode] (ema) at (3,-2.75) {};

		\path[directedge]
			(s0) edge (e1)
			(s0) edge (e2)
			(s0) edge[dashed] (edots)
			(s0) edge (em)

			(v1) edge (sink)
			(v2) edge (sink)
			(vdots) edge[dashed] (sink)
			(vn) edge (sink)

			(sink) edge[loop below] node {$1$} (sink)
		;

		\path[actionedge]
			(e1) edge (e1a)
			(e2) edge (e2a)
			(edots) edge[dashed] (edotsa)
			(em) edge (ema)
		;

		\path[probedge,densely dotted]
			(e1a) edge[swap] (v1)
			(e1a) edge (v2)
			(e2a) edge (v2)
			(e2a) edge (vdots)
			(ema) edge (vdots)
			(ema) edge (vn)
		;
		\path[probedge,dashed]
			(edotsa) edge (vdots)
		;
	\end{tikzpicture}
	
	\caption{The MDP used in the reduction from Vertex cover to cores.
		All dotted transitions (from \enquote{edge} state actions to \enquote{vertex} states) have a transition probability of $\frac{\varepsilon}{2}$.
		For readability, edges from \enquote{edge} state actions to the sink state, carrying the remaining $1 - \varepsilon$ probability, are omitted.
	}
	\label{fig:core_np}
\end{figure}
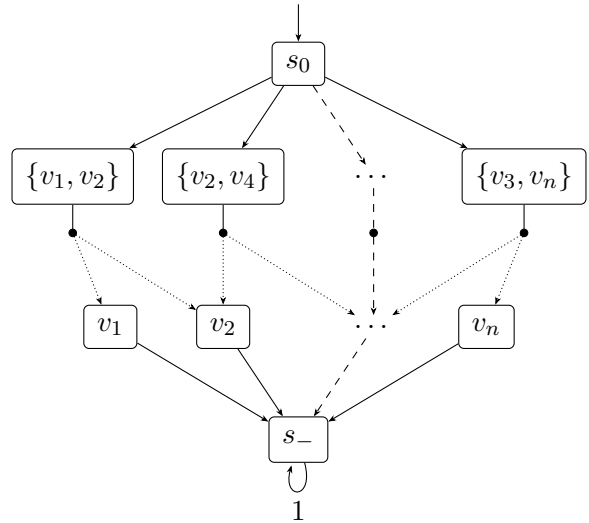
\begin{thm}[NP-completeness] \label{stm:np_complete}
	The problem $\{(\MDP, k) \mid \text{$\MDP$ has an $\varepsilon$-core of size $k$}\}$ is NP-complete for any $0 < \varepsilon < 1$.
\end{thm}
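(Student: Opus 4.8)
The plan is to establish membership in NP and NP-hardness separately. For membership, note that a core is itself a polynomial-size certificate: given $(\MDP, k)$ I would guess a set $S \subseteq \States$ with $\abs{S} = k$ and check the defining condition $\ProbabilityMax[\reach \overline{S}] < \varepsilon$. As this is a maximal reachability probability, it can be computed exactly in polynomial time by linear programming, yielding a rational that is compared against the fixed rational $\varepsilon$ exactly. Hence the verifier runs in polynomial time and the problem lies in NP.

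For hardness I would reduce from the classical \emph{Vertex Cover} problem: given a graph $G = (V,E)$ and a bound $k$, decide whether $G$ admits a vertex cover of size at most $k$. The gadget is exactly the \enquote{at least one of two} phenomenon already used in the proof of Proposition~\ref{stm:non_unique}. I build an MDP $\MDP_G$ with states $\States = \{s_0, z\} \cup \{x_v \mid v \in V\}$, where $s_0$ is initial and $z$ as well as every $x_v$ is absorbing (a single self-loop action). At $s_0$ I put one action $a_e$ per edge $e = \{u,v\} \in E$, with $\transitions(s_0, a_e) = \{x_u \mapsto \tfrac{\varepsilon}{2},\, x_v \mapsto \tfrac{\varepsilon}{2},\, z \mapsto 1 - \varepsilon\}$ (assuming $E \neq \emptyset$, the instance being trivial otherwise, so that $\stateactions(s_0) = \{a_e \mid e \in E\}$ is non-empty). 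Since $\varepsilon$ is a fixed rational the construction is polynomial and all probabilities are well defined.

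For correctness, observe that any core must contain $s_0$, since $s_0 \in \overline{S}$ forces $\ProbabilityMax[\reach \overline{S}] = 1$, and must contain $z$, since every $a_e$ reaches $z$ with probability $1 - \varepsilon \geq \varepsilon$ (using $\varepsilon \le \tfrac13 \le \tfrac12$). As all states other than $s_0$ are absorbing, the only way to leave a set $S \supseteq \{s_0, z\}$ is to land on some $x_v \notin S$ in the first step, so $\ProbabilityMax[\reach \overline{S}] = \max_{e = \{u,v\}}\big(\tfrac{\varepsilon}{2}[x_u \notin S] + \tfrac{\varepsilon}{2}[x_v \notin S]\big)$. An uncovered edge contributes leak exactly $\varepsilon \not< \varepsilon$, whereas a covered edge contributes at most $\tfrac{\varepsilon}{2}$; hence $S$ is an $\varepsilon$-core iff $C := \{v \mid x_v \in S\}$ is a vertex cover of $G$, and $\abs{S} = \abs{C} + 2$. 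Thus $\MDP_G$ has a core of size $k+2$ iff $G$ has a vertex cover of size $k$. To match the \enquote{size exactly $k$} phrasing I would use that cores are upward closed under set inclusion (adding states only shrinks $\reach \overline{S}$) and that $\States$ is always a core, so a core of a given size $m \le \abs{\States}$ exists iff the minimum core size is at most $m$; this lets me pass freely between the \enquote{exactly $k$} and \enquote{at most $k$} formulations.

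\textbf{The main obstacle} is calibrating the probabilities so that the strict threshold $< \varepsilon$ exactly separates covered from uncovered edges while simultaneously forcing $s_0$ and $z$ into every core: the choice $\tfrac{\varepsilon}{2}$ per endpoint makes an uncovered edge leak precisely $\varepsilon$ (disqualifying it) and a covered edge leak at most $\tfrac{\varepsilon}{2}$ (safe), and it works for all $\varepsilon \le \tfrac12$, in particular on the stated range $\varepsilon \le \tfrac13$. The reduction relies crucially on the fact that maximal reachability is attained by a memoryless deterministic strategy whose only freedom here is the single action chosen at $s_0$, which collapses the computation of $\ProbabilityMax[\reach \overline{S}]$ to a clean per-edge maximum.
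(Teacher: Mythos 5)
Your proof is correct, but the hardness part takes a genuinely different route from the paper: you reduce from Vertex Cover, while the paper reduces from 0/1-Knapsack. (The membership argument is the same in both.) Your gadget is leaner --- one action per edge at $s_0$, a single calibrated probability $\varepsilon/2$ per endpoint, and no instance-dependent scaling constant --- whereas the paper encodes item weights as chain lengths $s_{i,1}\cdots s_{i,w_i}$ and values as probabilities scaled by $m=\varepsilon/(V-v)$, and must additionally argue that w.l.o.g.\ each chain is taken entirely or not at all and that $v\le\tfrac12 V$ can be assumed. You also handle the \enquote{size exactly $k$ versus size at most $k$} conversion explicitly via upward-closure of cores, a point the paper glosses over, and your calibration works for all $\varepsilon\le\tfrac12$ rather than only $\varepsilon\le\tfrac13$ (the paper relegates the extension to $\varepsilon<\tfrac12$ to Remark~\ref{rem:polytime_epsilon_large}). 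The one thing the paper's reduction buys that yours does not: the Knapsack gadget has a \emph{single} action at $s_0$, so it is effectively a Markov chain, and hence shows NP-hardness already for Markov chains; your construction essentially relies on the nondeterministic choice among the edge actions $a_e$, so it establishes hardness only for genuine MDPs. Since the theorem is stated for MDPs, your argument fully proves the claim as stated.
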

\begin{proof} \label{proof:np_complete}
	\textit{Containment:} The problem is in NP, since the reachability problem of a given set of states in MDP is in P.
	Thus, a core serves as its own, linearly sized certificate.

	\textit{Hardness:} For hardness, we show a reduction from the \texttt{VERTEX-COVER} problem.
	We briefly recall this problem: One is given an (undirected) graph $(V, E)$ and a threshold $k \in \Naturals$.
	The question is whether it is possible to find a subset of vertices $V' \subseteq V$ with size at most $k$, i.e.\ $\cardinality{V'} \leq k$, such that each edge of the graph is incident to at least one vertex in $V'$, i.e.\ for every $\{v, w\} \in E$ either $v \in V'$ or $w \in V'$ (or both).

	Let thus a graph $(V, E)$ and number $k \in \Naturals$ be an instance of \texttt{VERTEX-COVER}.
	We construct the MDP $\MDP$ as depicted in Figure~\ref{fig:core_np}.
	We define the set of states $\States = \{s_0, s_-\} \union V \union E$, i.e.\ there is a state for each edge and each vertex of the graph together with two special states $s_0$ and $s_-$.
	In $s_0$, there is an action for each edge, leading to the respective edge's state with probability 1.
	Each edge state has a single available action, leading to the two vertices incident to this edge with probability $\frac{\varepsilon}{2}$ each.
	The remaining probability of $1 - \varepsilon$ directly leads to the sink state $s_-$.
	Each vertex state directly moves to the sink state with probability 1.
	Essentially, in the initial state, we can pick any edge of the graph and probabilistically move to its two vertices.
	Clearly, the size of the MDP is linear in the size of the graph (note that $\varepsilon$ is fixed and thus not part of the input).

	We show that this MDP admits a core of size at most $2 + \cardinality{E} + k$ iff the input graph has a vertex cover of size at most $k$.
	First, assume that a vertex cover $V'$ exists.
	We show that $\Core = \{s_0, s_-\} \union E \union V'$ is a core by contradiction.
	Assume that the probability of reaching $\overline{\Core} = V \setminus V'$ is at least $\varepsilon$ and let $\strategy$ be a deterministic witness strategy.
	Let $e = \{v, w\} \in E$ the edge which is selected by $\strategy$ in the initial state, i.e.\ the unique state with $\transitions(s_0, \strategy(s_0), e) = 1$.
	Clearly, neither $v$ nor $w$ are in $\Core$, since otherwise $\strategy$ would not be a witness.
	This contradicts the assumption that $V'$ is a vertex cover, since $e$ is not covered by it.

	Now, assume that a core $\Core$ of size $2 + \cardinality{E} + k$ exists.
	Clearly, $s_0$, $s_-$ and all edge states are part of any core, since all of them can be reached with probability one (and $\varepsilon < 1$).
	Hence, let $V' = V \intersection \Core$ be the vertex states in the core.
	Clearly, $\cardinality{V'} \leq k$.
	We show that $V'$ is a vertex cover.
	Let $e = \{v, w\} \in E$ be any edge.
	By analogous argumentation as above we necessarily have either $v \in V'$ or $w \in V'$.
\end{proof}
Observe that this result only implies that finding the smallest minimal cores is hard.
By virtue of Theorem~\ref{stm:core_characterization}, we still are interested in finding small, not necessarily minimal cores---any reduction in number of states directly translates to a speed-up in subsequent computations.
Thus we introduce a learning-based approach which quickly identifies reasonably sized cores in the following section.

\begin{rem} \label{rem:polytime_preprocessing}
	We highlight that the above proof shows NP-completeness even for the restricted class of acyclic MDP.
	However, an NP-completeness proof for Markov chains remains open.
	Instead of using actions to choose edge states, we could introduce a uniform distribution over them.
	However, then the transition probabilities from edge states to vertex states would need to be re-weighted to $\cardinality{E} \frac{\varepsilon}{2}$.
	Consequently, $\varepsilon$ has to be smaller than $\frac{1}{\cardinality{E}}$ and thus this does not prove the NP-hardness for arbitrary $\varepsilon$.

	An interesting variant is the computation of a \enquote{pre-core}: states which necessarily are contained in any core for some given $\varepsilon$, as, for example, $s_0$, $E$ and $s_-$ in the above hardness proof.
	We suspect that a greedy algorithm may be able to identify such states in PTIME.
	Such an analysis may yield an interesting preprocessing step to quickly identify very important states, thus speeding up computation of cores.
	This particularly may be helpful for systems which are strongly connected, which we explain later.
\end{rem}

\subsection{Learning a Core}

\newcommand{\upperbound}{U}

\begin{algorithm}[t]
	\newcommand{\pop}{pop}
	\caption{\textsc{LearnCore}} \label{alg:learn_infinite_core}
	\begin{algorithmic}[1]
		\Require MDP $\MDP$, precision $\varepsilon > 0$, upper bounds $\upperbound$, state set $\Core$ with $\initialstate \in \Core$
		\Ensure $\Core$ s.t. $\Core$ is an $\varepsilon$-core
		\While{$\upperbound(\initialstate) \geq \varepsilon$}
			\State $\finitepath \gets \textsc{SamplePath}(\initialstate, \upperbound)$ \Comment Generate path
			\State $\Core \gets \Core \union \finitepath$ \Comment Expand core
			\State $\textsc{UpdateECs}(\Core, \upperbound)$
			\For{$s$ in $\finitepath$} \Comment Back-propagate values
				\State $\upperbound(s) \gets \min\{\upperbound(s), \max_{a \in \Actions(s)} \sum_{s' \in S} \Delta(s, a, s') \cdot \upperbound(s')\}$ \label{line:learn_infinite_core:update}
			\EndFor
		\EndWhile
		\State \Return $\Core$
	\end{algorithmic}
\end{algorithm}

As we have shown in the previous section, finding a minimal core is NP-complete, hence we aim for a best-effort, learning-based algorithm, often identifying a small core.
To this end, we introduce a guided, sampling-based algorithm in Algorithm~\ref{alg:learn_infinite_core}.
Our method is structurally very similar to the algorithm introduced in \cite{atva14}.
Nevertheless, we present it explicitly here since (i)~it is significantly simpler and (ii)~we introduce modifications later on.
We assume that the model is described by an initial state and a successor function, yielding all possible actions and the resulting distribution over successor states, instead of an explicit list of transitions.
This allows us to only construct a small fraction of the state space and achieve sub-linear runtime (in the number of states and transitions) for many models.
In particular, we observe in Section~\ref{sec:experiments} that for some models we are able to identify a small core orders of magnitude faster than the construction of the state set $\States$, speeding up subsequent computations drastically.

During the execution of the algorithm, the system is traversed by following the successor function, starting from the initial state.
Each state encountered is stored in a set of \emph{explored} states, all other, not yet visited states are \emph{unexplored}.
Unexplored successors of explored states are called \emph{partially explored}:
The algorithm is aware of their existence but has no other information about these states.
Furthermore, the algorithm stores for each (explored) state $s$ an upper bound $\upperbound(s)$ on the probability of reaching unexplored states starting from $s$.
The algorithm gradually grows the set of explored states and simultaneously updates their upper bounds safely until the desired threshold is achieved in the initial state, i.e.\ $\upperbound(\initialstate) < \varepsilon$.
Then, the set of explored states provably satisfies the core property.
In particular, the upper bound is updated by sampling a path according to \textsc{SamplePath} and back-propagating the values along that path using Bellman updates (also called Bellman backups).

\textsc{SamplePath} samples paths following some heuristic.
These paths do not have to be rooted in the initial state $\initialstate$, follow the transition probabilities given by the successor function, resolve non-determinism in a particular way, or be of a particular length.
For example, a successor might be sampled with probability proportional to its upper bound times the transition probability or in a round-robin scheme\footnote{For example, by numbering the successors of some action arbitrarily, we simply select a successor in ascending fashion whenever we choose that particular action.}.
In our implementation, we follow the former idea.
The intuition is as follows.
We want to explore states which indeed are likely to be reached, since those have to be included in a core anyway.
But we do not want to waste computational effort on states which have a small probability of reaching new unexplored states.
The product of transition probability and upper bound is only large if that successor is both likely to be reached and has a (presumably) large chance of reaching a new unexplored state.
Otherwise, the successor probably is hardly reachable in general or we already have gathered enough information and hence do not need to explore further in that direction.
As we show later in the experimental evaluation in Section~\ref{sec:experiments}, using the currently stored upper bounds as guidance often yields significant speed-ups in practice.
\textsc{SamplePath} can also incorporate machine learning techniques and domain knowledge about the system, yielding even better suggestions about important states.

\textsc{UpdateECs} identifies MECs of the currently explored sub-system and \enquote{collapses} them into a single representative state.
Alternatively, this can be viewed as linking the bounds of all states in each end component together.
In particular, each state's bound is set to the maximum bound of all actions leaving the end component, omitting all \enquote{internal} actions.
This is necessary to ensure convergence of the upper bounds to the correct value.
Technically this process removes spurious fixed points of $\upperbound$.
We briefly explain this issue in the following, it is more thoroughly explained in, e.g., \cite{atva14,cav17}.

Recall that from each state within an EC we can reach every other state of the EC with probability one.
Remaining inside the (explored) EC will not lead to an unexplored state.
If, for example, some state $s$ can reach unexplored states with probability $0.5$, so can every state $s'$ in the EC by first moving to $s$ and then following the actions necessary to achieve the $0.5$ probability.
Setting the upper bound of all states in an EC to the maximum upper bound of all \enquote{outgoing} actions thus intuitively is correct---but it is also necessary for convergence:
Observe that by definition, the upper bound of each state initially is set to $1$.
Now, consider, for example, a MEC consisting of a single state $s$ with a self loop under action $a$.
Since $s$ can reach a state with upper bound $1$ under action $a$ (namely itself), the update in Line~\ref{line:learn_infinite_core:update} of the algorithm will always keep $\upperbound(s)$ at $1$.
By identifying $(\{s\}, \{a\})$ as a MEC and removing the internal action, we ensure convergence.
Furthermore, MECs without outgoing edges are the only parts of the system which \enquote{create} $0$ upper bounds---only there do we know for sure that no unexplored state can be reached.
We omit a precise definition of the underlying MEC-quotienting procedure \cite{de1997formal}, since it entails a lot of technical subtleties, distracting from our main result.
For the sake of understanding the algorithm, it is safe to assume that $\MDP$ does not contain any MECs except trivial sinks, which we can easily identify and immediately assign an upper bound of $0$.

For (a.s.) termination, we only require that the sampling heuristic is \enquote{(almost surely) fair}.
This means that (i)~any partially explored state is reached eventually (a.s.), in order to explore a sufficient part of the state space, and (ii)~any explored state with $U(s) > 0$ is visited infinitely often (a.s.), in order to back-propagate values accordingly.
Observe that we do not require that $\finitepath$ always starts in $\initialstate$, only that this happens again and again.
Further, we require that the initial upper bounds are consistent with the given state set, i.e.\ $U(s) \geq \ProbabilityMax<s>[\reach \overline{\Core}]$.
This is trivially satisfied by $U(\cdot) = 1$.
Note that in contrast to \cite{atva14}, the set whose reachability we approximate dynamically changes and, further, only upper bounds are computed.

\begin{thm}
	Algorithm~\ref{alg:learn_infinite_core} is correct and terminates (a.s.) if \textsc{SamplePath} is (a.s.) fair and the given upper bounds $U$ are consistent with the given set $\Core$.
\end{thm}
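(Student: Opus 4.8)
The plan is to prove two things separately: partial correctness (if the algorithm terminates, the returned set is an $\varepsilon$-core) and almost-sure termination (under the fairness assumption). The key invariant driving both parts is that the stored upper bounds $\upperbound$ remain \emph{consistent}, meaning $\upperbound(s) \geq \ProbabilityMax<s>[\reach \overline{\Core}]$ for every explored state $s$, throughout execution. Once this invariant is established, correctness is immediate: the loop exits only when $\upperbound(\initialstate) < \varepsilon$, and consistency then gives $\ProbabilityMax[\reach \overline{\Core}] = \ProbabilityMax<\initialstate>[\reach \overline{\Core}] \leq \upperbound(\initialstate) < \varepsilon$, which is exactly the defining property of an $\varepsilon$-core.

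\textbf{Maintaining the invariant.} I would argue that each of the three operations preserves consistency. The hypothesis guarantees it holds initially. Expanding $\Core$ by adding sampled states only \emph{shrinks} $\overline{\Core}$, so $\ProbabilityMax<s>[\reach \overline{\Core}]$ can only decrease, leaving any valid upper bound valid. The Bellman backup in Line~\ref{line:learn_infinite_core:update} preserves consistency because the true reachability probabilities satisfy the Bellman equation: $\ProbabilityMax<s>[\reach \overline{\Core}] = \max_{a \in \stateactions(s)} \sum_{s'} \transitions(s,a,s') \cdot \ProbabilityMax<s'>[\reach \overline{\Core}]$ for any non-target state, so taking the maximum over actions of the expected successor bound cannot drop below the truth, and the outer $\min$ with the old value only tightens a value that was already an over-approximation. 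The \textsc{UpdateECs} step is the delicate one: collapsing a MEC and relinking bounds to the maximal outgoing action is precisely the standard MEC-quotient argument (as in \cite{atva14,cav17}), where one shows that the quotient MDP has the same reachability value for $\overline{\Core}$ and that setting each EC-internal state's bound to the maximum over exiting actions remains a sound over-approximation.

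\textbf{Termination.} For almost-sure termination, the goal is to show $\upperbound(\initialstate)$ drops below $\varepsilon$ a.s. Here fairness does the work: part~(i) guarantees every partially-explored state is eventually reached, so the explored region grows, and since $\States$ is finite this process stabilizes to some fixed explored set after finitely many expansions; part~(ii) guarantees every state with $\upperbound(s) > 0$ is visited infinitely often, so the Bellman backups are applied infinitely often to a \emph{fixed} MEC-collapsed sub-MDP. On that fixed finite structure, iterating the Bellman operator converges to the least fixed point, which for reachability of $\overline{\Core}$ with the MEC-collapse removing spurious fixed points equals the true value $\ProbabilityMax<\cdot>[\reach \overline{\Core}]$. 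Once the explored set stabilizes and contains enough of the state space that $\ProbabilityMax<\initialstate>[\reach \overline{\Core}] < \varepsilon$, the converging bounds force $\upperbound(\initialstate) < \varepsilon$ and the loop halts.

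\textbf{The main obstacle} I anticipate is the interaction between the two dynamic processes: the target set $\overline{\Core}$ changes as $\Core$ grows, \emph{and} the bounds are iterated, so one cannot simply cite a static value-iteration convergence theorem. The careful argument is that expansion happens only finitely often (finiteness of $\States$), after which the structure is frozen and standard convergence applies; a clean way to package this is to observe the explored set is monotone and bounded, hence eventually constant on any terminating run, and then reason about the tail where only Bellman backups occur. The second subtlety is justifying that the MEC-collapse genuinely eliminates the spurious fixed points (e.g.\ the self-loop keeping $\upperbound(s)=1$ forever), which I would defer to the quotient construction of \cite{de1997formal} rather than re-derive, since the excerpt explicitly omits its precise definition.
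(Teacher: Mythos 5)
Your proof takes essentially the same route as the paper's: correctness via the invariant that $U$ over-approximates the escape probability (preserved by core expansion, Bellman backups, and EC-collapsing), and almost-sure termination via fairness forcing the explored set to stabilize at the full (reachable) state space, after which the asynchronous iteration on the EC-collapsed structure converges to the true value. The only difference is one of detail: where you cite standard asynchronous value-iteration convergence, the paper inlines that argument (the set of states with maximal limiting bound would be closed under the dynamics and hence contain an end component, contradicting the collapse), and note that iteration from above converges to the \emph{greatest} fixed point rather than the least --- it is exactly the EC-collapse that makes this coincide with the true value, as your final paragraph anticipates.
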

\begin{proof}
	\emph{Correctness}: By assumption $U(s)$ initially is a correct upper bound for the \enquote{escape} probability, i.e.\ $U(s) \geq \ProbabilityMax<s>[\reach \overline{\Core}]$.
	Each update in Line~\ref{line:learn_infinite_core:update} preserves correctness, independent of the sampled path.
	Moreover, we set $\upperbound(s) \gets 0$ if \textsc{UpdateECs} identifies an EC without outgoing actions, which trivially is correct, too.
	Hence, if $\upperbound(\initialstate) < \varepsilon$, we have $\ProbabilityMax<s>[\reach \overline{\Core}] < \varepsilon$.

	\emph{Termination}: Recall that we assumed that \textsc{SamplePath} is (a.s.) fair.
	This implies that eventually the whole model will be explored, i.e.\ $\Core = \States$ (otherwise there would exist a partially explored state which is never visited, contradicting the fairness condition).
	Consequently, all MECs will be collapsed by \textsc{UpdateECs}.
	In particular, all MECs without outgoing actions have their upper bound $\upperbound$ set to $0$.
	Moreover, since $\upperbound$ is monotonically decreasing by definition, the upper bounds of any state $s$ converge to a value $\upperbound^*(s)$.
	Now, assume that there exists a state $s$ where $\upperbound^*(s) > 0$, i.e.\ its upper bound does not converge to zero.
	Let w.l.o.g.\ $s$ be a state with maximal $\upperbound^*$ among all states.
	Recall that every state is visited infinitely often by our fairness assumption, in particular $s$.
	By definition of $\upperbound$, it is easy to see that all successors $s'$ of $s$ under any action necessarily have the same value $\upperbound^*(s') = \upperbound^*(s)$, since otherwise the value of $s$ would eventually be decreased by an update.
	Now, this implies that the set of states with maximal values, i.e.\ $\{s' \mid \upperbound^*(s') = \upperbound^*(s)\}$ is closed under the transition dynamics of the system and contains at least one end component, contradicting the fact all ECs are collapsed by \textsc{UpdateECs}.
	Consequently, we have $\upperbound^*(s) = 0$ for any state $s$, in particular we have that eventually $\upperbound(\initialstate) < \varepsilon$, proving termination.
\end{proof}
As Algorithm~\ref{alg:learn_infinite_core} is correct and terminates for any faithful upper bounds and initial state set, we can restart the algorithm and interleave it with other approaches refining the upper bounds.
For example, one could periodically update the upper bounds using, e.g., strategy iteration, which can speed up convergence drastically for particular models.
Further, we can reuse the computed upper bounds and state set to compute a core for a tighter precision.

\subsection{Extending Cores to other Properties and Models} \label{sec:cores:extensions}

We explain how a core can be used for verification and how our approach differs from existing ones.
Of course, we can compute reachability or safety objectives on a given core $\varepsilon$-precisely.
In this case, our approach conceptually is not too different from the one in \cite{atva14}.
Yet, we argue that our approach yields a stronger result.
Due to cores being an intrinsic object, we are able to reuse and adapt this idea easily to many other objectives.
Observe that a dedicated adaption may still yield slightly better performance, but requires significantly more work.
For example, see \cite{cav17} for an adaption to mean payoff.

To see how we can connect our idea to mean payoff, we briefly explain this objective and then recall an observation of \cite{cav17}.
First, rational rewards are assigned to each state, which are obtained upon each visit to that state.
Then, the mean payoff of a particular run is the limit average reward obtained from the visited states.
The mean payoff under a particular strategy then is obtained by integrating over the set of all runs.
As mentioned by \cite{cav17}, a mean payoff objective can be decomposed into a separate analysis of each (explored) MEC and a (weighted) reachability query
\begin{equation*}
	\textsf{optimal mean payoff} = \sup_{\strategy \in \Strategies} {\sum}_{M \in \Mecs(\MDP)} \textsf{mean payoff of $\strategy$ in $M$} \cdot \ProbabilityStrat<\strategy>[\reach \always M].
\end{equation*}
Since we can bound the reachability on unexplored MECs by the core property, we can easily bound the error on the computed mean payoff (assuming we know an a-priori lower and upper bound on the reward function).
Consequently, we can approximate the optimal mean payoff by only analysing the corresponding core.

Similarly, LTL queries and parity objectives can be answered by a decomposition into analysis of MECs and their reachability.
Intuitively, given a MEC one can decide whether the MEC is \enquote{winning} or \enquote{losing} for these objectives.
The overall probability of satisfying the objective then equals the probability of reaching a winning MEC \cite{BaierBook}.
Again, we can bound the reachability of unexplored MECs and thus the error we incur when only analysing the core.
Note that the statement of Theorem~\ref{stm:core_characterization} directly carries over to these settings.
Moreover, it also transfers to \emph{minimal reachability / satisfaction} queries.
Intuitively, the minimal probability of reaching a given set is obtained by maximizing the probability of reaching states from which the given set can be avoided forever, i.e.\ ECs not intersecting the given set.
More formally, we have $\ProbabilityMin<\MDP, s>[\reach R] = 1 - \ProbabilityMax<\MDP, s>[\always \overline{R}]$.
Observe that in this sense minimal reachability can be phrased as maximizing satisfaction of an LTL query.
More directly, we can again decompose minimal reachability into a graph analysis to identify all \enquote{safe} ECs and then apply maximal reachability.
Hence, given a core, we can approximate the minimal probability up to a precision of $\varepsilon$ and a variant of Theorem~\ref{stm:core_characterization} is applicable.

In general, many verification tasks can be decomposed into a reachability query and analysis of specific parts of the system.
Since our framework is agnostic of the verification task in question, it can be transparently plugged in to obtain significant speed-ups at a \emph{controllable} loss of precision.

We highlight that our approach moreover is directly applicable to models with infinite state space, since finite cores still may exist for these models.
Moreover, we can also apply our core idea directly to \emph{stochastic games}, i.e.\ MDP where an additional, antagonistic player is introduced.
Here, we can compute a core by interpreting the game as an MDP where both players cooperate.
In other words, a core of a stochastic game is a set of states where neither player can ever escape from with significant probability.
It is not difficult to see that the essence of Theorem~\ref{stm:core_characterization} also carries over to this setting.

Even more generally, the essential idea of cores, namely to classify states as \enquote{important} based on the probability of them occurring along a path can be transferred to many other probabilistic formalisms, immediately providing an intuitive, unified notion of importance to these areas.
For example, the concept of \emph{stochastic invariants} \cite{DBLP:conf/popl/ChatterjeeNZ17} of probabilistic programs is equivalent to a core of the underlying Markov chain.

\section{Beyond Infinity} \label{sec:finite_cores}

In the previous section, we have seen that MECs play an essential role for many objectives.
Hence, we study the interplay between cores and MECs.
\begin{prop} \label{stm:reachable_mecs_in_core}
	Let $M \in \Mecs(\MDP)$ be a MEC.
	If there is a state $s \in M$ with $\ProbabilityMax[\reach \{s\}] \geq \varepsilon$ then $M \subseteq \Core$ for every $\varepsilon$-core $\Core$.
\end{prop}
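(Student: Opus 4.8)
The plan is to argue by contradiction, exploiting the characteristic property of end components that within a MEC every state reaches every other state almost surely. Suppose $\Core$ is an $\varepsilon$-core but $M \not\subseteq \Core$, so that there is a state $t \in M \intersection \overline{\Core}$. I will show that this forces $\ProbabilityMax[\reach \overline{\Core}] \geq \varepsilon$, contradicting the defining core inequality $\ProbabilityMax[\reach \overline{\Core}] < \varepsilon$.

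First I would invoke the standard property of end components (condition (ii) of the definition): the sub-MDP induced by $M$ is closed and strongly connected, so for the fixed target $t \in M$ there is a strategy $\strategy'$ that never leaves $M$ and satisfies $\ProbabilityStrat<\strategy'><s>[\reach \{t\}] = 1$ for every $s \in M$. Intuitively, a memoryless strategy that always plays an action available inside $M$ leading along a shortest path towards $t$ induces a finite Markov chain on $M$ in which $t$ is hit with probability one. Next, let $\strategy$ be a strategy attaining the optimum $\ProbabilityStrat<\strategy><\initialstate>[\reach \{s\}] = \ProbabilityMax[\reach \{s\}] \geq \varepsilon$, which exists since optimal strategies exist for reachability. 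I then stitch the two strategies together: define $\hat\strategy$ to follow $\strategy$ until the first visit of $s$ and to behave as $\strategy'$ from that point on. Because reachability is prefix-independent and, conditioned on reaching $s$, the continuation $\strategy'$ reaches $t$ almost surely, the probability of reaching $s$ under $\hat\strategy$ coincides with that under $\strategy$, and every such run eventually hits $t$. Hence
\begin{equation*}
	\ProbabilityMax[\reach \{t\}] \geq \ProbabilityStrat<\hat\strategy><\initialstate>[\reach \{t\}] \geq \ProbabilityStrat<\hat\strategy><\initialstate>[\reach \{s\}] = \ProbabilityStrat<\strategy><\initialstate>[\reach \{s\}] \geq \varepsilon.
\end{equation*}
Finally, since $t \in \overline{\Core}$ we have $\reach \{t\} \subseteq \reach \overline{\Core}$, so $\ProbabilityMax[\reach \overline{\Core}] \geq \ProbabilityMax[\reach \{t\}] \geq \varepsilon$, the desired contradiction. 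Note that the degenerate case $s = t$ is covered automatically, as $\strategy'$ then reaches $t$ from $s$ trivially.

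The main obstacle is the rigorous justification of the two end-component facts used, namely the almost-sure reachability of $t$ from any state within $M$ and the correctness of the strategy-stitching (its measurability and the conditional-probability bound that reaching $s$ is followed almost surely by reaching $t$). Both are routine consequences of the end-component definition together with the existence of optimal reachability strategies, so once they are invoked the displayed inequality chain closes the argument immediately.
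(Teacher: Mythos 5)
Your proposal is correct and is essentially the paper's own argument, just spelled out in more detail: the paper's one-line proof asserts $\mathbb{P}^{\max}[\Diamond \{s\}] = \mathbb{P}^{\max}[\Diamond \{s'\}]$ for $s, s' \in M$ via the almost-sure intra-MEC reachability, which is exactly your strategy-stitching step, and the concluding "thus $s' \in S_\varepsilon$" is the same contradiction with the core inequality that you make explicit. (Minor quibble: reachability is not literally prefix-independent, but the property you actually use — that $\hat\pi$ agrees with $\pi$ up to the first visit of $s$, hence induces the same probability of reaching $s$ — is correct.)
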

\begin{proof}
	Recall that for $s, s' \in M$, we have $\ProbabilityMax<s>[\reach \{s'\}] = 1$, thus $\ProbabilityMax[\reach \{s\}] = \ProbabilityMax[\reach \{s'\}] \geq \varepsilon$ and thus $s' \in \Core$.
\end{proof}
This implies that sufficiently reachable MECs always need to be contained in a core \emph{entirely}.
Many models comprise only a few or even a single MEC, e.g., restarting protocols like mutual exclusion or biochemical models of reversible reactions.
Together with the result of Theorem~\ref{stm:core_characterization}, i.e.\ constructing a core is necessary for $\varepsilon$-precise answers, this shows that in general we cannot hope for any reduction in state space, even when only requiring $\varepsilon$-optimal solutions. 
In our experimental evaluation, strongly connected components prove to be a challenge for our approach, since it spends a lot of time computing unnecessary information until eventually the whole model is explored.
Here, a \enquote{pre-core} approach as mentioned in Remark~\ref{rem:polytime_preprocessing} may help to improve performance.

However, real-world models often exhibit a particular structure, with many states only being visited infrequently.
For example, a biological process may reach some ratio of specimen very rarely.
Since we necessarily have to give up on something to obtain further savings, we propose an extension of our idea, motivated by a modification of our running example.

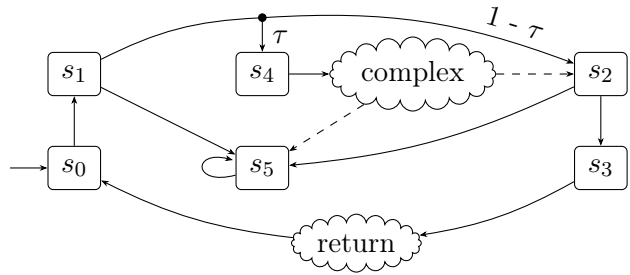
\begin{figure}[t]
	\centering
	\begin{tikzpicture}[auto,initial text=]
		\node[state,initial] at (0,0) (muc) {$\initialstate$};
		\node[state] at (0,1.25) (start) {$s_1$};
		\node[state] at (7,1.25) (land) {$s_2$};
		\node[state] at (7,0) (prague) {$s_3$};
		\node[state] at (2.5,1.25) (error) {$s_4$};
		\node[draw,cloud,inner sep=2pt,cloud puff arc=160, cloud puffs=20, aspect=3] at (4.5,1.25) (recovery) {complex};
		\node[state] at (2.5,0) (crash) {$s_5$};
		\node[actionnode] at (2.5,2) (fly-node) {};
		\node[draw,cloud,inner sep=2pt,cloud puff arc=160, cloud puffs=20, aspect=3] at (3.75,-1) (return) {return};

		\path[directedge]
			(muc) edge node[sloped,anchor=south] {} (start)
			(land) edge node[sloped,anchor=south] {} (prague)
			(start) edge node[swap,sloped,anchor=north] {} (crash)
			(land) edge[bend left=10] node[sloped,anchor=north] {} (crash)
			(error) edge node {} (recovery)
			(recovery) edge[dashed] (crash)
			(recovery) edge[dashed] (land)
			(prague) edge[bend left=10] (return)
			(return) edge[bend left=10] (muc)
			(crash) edge[loop left] (crash)
		;
		\path[actionedge]
			(start) edge[bend left=12] node[sloped,anchor=south] {} (fly-node)
		;
		\path[probedge]
			(fly-node) edge[bend left=12,pos=0.8] node[sloped,anchor=south] {1 - $\tau$} (land)
			(fly-node) edge node {$\tau$} (error)
		;
	\end{tikzpicture}

	\caption{An adaptation of the model from Figure~\ref{fig:example_full}, with an added return trip, represented by the \enquote{return} node.
	State and action labels have been omitted for readability.}
	\label{fig:example_loop}
\end{figure}

Instead of a one-way trip, consider the plane going back and forth, as shown in Figure~\ref{fig:example_loop}.
Now, the plane eventually will suffer from a bit flip.
Additionally, assuming that there is a non-zero probability of not being able to recover from the error, the plane will eventually crash with probability 1, \emph{independent} of the strategy.

We make two observations.
First, any core needs to contain at least parts of the recovery sub-system, since it is reached with probability $1$.
Thus, this (complex) sub-system has to be constructed.
Second, the witness strategy is meaningless, since any strategy is optimal---the crash cannot be avoided in the long run.
In particular, deliberately crashing the plane has the same long run performance as flying it \enquote{optimally}.
Note that this is quite different from computing the optimal strategy for a single trip and applying it repeatedly.
In practice, we are, in fact, often interested in the performance of such a model for a long, but not necessarily infinite, horizon.

To this end, one could compute the step-bounded variants of the objectives, but this incurs several problems: (i)~choosing a sensible step bound $n$, (ii)~computational overhead (a precise computation has worst-case complexity of $\cardinality{\Delta} \cdot n$ even for reachability), and (iii)~all states reachable within $n$ steps have to be constructed (which equals the whole state space for practically all models and reasonable choices of $n$).
In the following, we present a different approach to this problem, again based on the idea of cores.

\subsection{Finite-Horizon Cores}

We introduce \emph{finite-horizon cores}, which are completely analogous to (infinite-horizon) cores, only with a step bound attached to them.

\begin{defi}[Finite-Horizon Core]
	Let $\MDP$ be an MDP, $\varepsilon > 0$, and $n \in \Naturals$.
	A set $\FiniteCore \subseteq \States$ is an \emph{$n$-step $\varepsilon$-core} if $\ProbabilityMax[\reach<n> \overline{\FiniteCore}] < \varepsilon$
	and it is a \emph{minimal $n$-step $\varepsilon$-core} if it additionally is minimal w.r.t.\ set inclusion.
\end{defi}
As before, whenever $n$ or $\varepsilon$ are clear from the context, we may drop the corresponding part of the name.
Similar properties hold and we omit the completely analogous proof.
\begin{thm}
	Let $\MDP$ be an MDP, $\varepsilon > 0$, and $n \in \Naturals$.
	Then $\FiniteCore \subseteq \States$ is an $n$-step $\varepsilon$-core if and only if for all $R \subseteq \States$ we have $0 \leq \ProbabilityMax[\reach<n> R] - \ProbabilityMax[\reach<n> (R \intersection \FiniteCore) \intersection \always<n> \FiniteCore] < \varepsilon$. 
\end{thm}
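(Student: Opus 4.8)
The plan is to mirror the proof of Theorem~\ref{stm:core_characterization} line for line, replacing the unbounded operators $\reach$ and $\always$ by their step-bounded counterparts $\reach<n>$ and $\always<n>$ throughout. The argument is driven entirely by three set-theoretic facts about runs, and the only work is to confirm that each survives the substitution; since they all do, no new ideas are needed, which is why the paper is entitled to call the proof ``completely analogous.''

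For the forward direction I would assume $\FiniteCore$ is an $n$-step $\varepsilon$-core and fix an arbitrary $R \subseteq \States$. The backbone is the same chain of three steps. First, a case distinction on whether a run leaves $\FiniteCore$ within the first $n$ steps or stays inside it yields
\[
	\ProbabilityMax[\reach<n> R] \leq \ProbabilityMax[\reach<n> R \intersection \reach<n> \overline{\FiniteCore}] + \ProbabilityMax[\reach<n> R \intersection \always<n> \FiniteCore],
\]
using that $\reach<n> \overline{\FiniteCore}$ and $\always<n> \FiniteCore$ partition the run space and that $\ProbabilityMax[A \union B] \leq \ProbabilityMax[A] + \ProbabilityMax[B]$ for the supremum over strategies. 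Second, on any run in $\always<n> \FiniteCore$ every state visited within the first $n$ steps lies in $\FiniteCore$, so the events $\reach<n> R \intersection \always<n> \FiniteCore$ and $\reach<n>(R \intersection \FiniteCore) \intersection \always<n> \FiniteCore$ coincide as sets of runs; this gives both the equality $\ProbabilityMax[\reach<n> R \intersection \always<n> \FiniteCore] = \ProbabilityMax[\reach<n>(R \intersection \FiniteCore) \intersection \always<n> \FiniteCore]$ and, via the inclusion $\reach<n>(R \intersection \FiniteCore) \intersection \always<n> \FiniteCore \subseteq \reach<n> R$, the bound $\ProbabilityMax[\reach<n>(R \intersection \FiniteCore) \intersection \always<n> \FiniteCore] \leq \ProbabilityMax[\reach<n> R]$, which supplies the lower estimate $0 \leq \cdots$. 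Third, the $n$-step core property gives $\ProbabilityMax[\reach<n> R \intersection \reach<n> \overline{\FiniteCore}] \leq \ProbabilityMax[\reach<n> \overline{\FiniteCore}] < \varepsilon$. Chaining these together collapses the difference to $\ProbabilityMax[\reach<n> R] - \ProbabilityMax[\reach<n>(R \intersection \FiniteCore) \intersection \always<n> \FiniteCore] \leq \ProbabilityMax[\reach<n> R \intersection \reach<n> \overline{\FiniteCore}] < \varepsilon$.

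For the converse I would argue by contraposition, exactly as before: if $\FiniteCore$ is not an $n$-step $\varepsilon$-core then $\ProbabilityMax[\reach<n> \overline{\FiniteCore}] \geq \varepsilon$, and choosing $R = \overline{\FiniteCore}$ forces $R \intersection \FiniteCore = \emptyset$, so $\ProbabilityMax[\reach<n>(R \intersection \FiniteCore) \intersection \always<n> \FiniteCore] = 0$ and the difference equals $\ProbabilityMax[\reach<n> \overline{\FiniteCore}] \geq \varepsilon$, violating the strict upper bound for this $R$.

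The one spot warranting a second look — and the only conceptual difference from the unbounded case — is the precise step-counting convention. I would pin down that ``reaches $R$ within $n$ steps'' and ``stays in $\FiniteCore$ for the first $n$ steps'' refer to the same index range of positions, so that $\reach<n> \overline{\FiniteCore}$ and $\always<n> \FiniteCore$ are genuinely complementary and the event-equality in the second step holds exactly. With this convention fixed, every line is a literal copy of the proof of Theorem~\ref{stm:core_characterization}, so I expect no genuine obstacle beyond this bookkeeping.
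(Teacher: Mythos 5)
Your proposal is correct and is exactly the argument the paper has in mind: the paper omits the proof as \enquote{completely analogous} to that of Theorem~\ref{stm:core_characterization}, and your line-by-line transcription with $\reach<n>$ and $\always<n>$ in place of $\reach$ and $\always$ (including the contrapositive with $R = \overline{\FiniteCore}$) is that analogous proof. Your added care about the step-indexing convention, ensuring $\reach<n>\overline{\FiniteCore}$ and $\always<n>\FiniteCore$ are genuinely complementary, is the only point where the bounded case could go wrong, and you handle it correctly.
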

Finite-horizon cores are much smaller than their \enquote{infinite} counterparts on some models, even for large step bounds $n$.
For instance, in our modified running example of Figure~\ref{fig:example_loop}, omitting the \enquote{complex} states gives an $n$-step core even for very large $n$ (depending on $\tau$).
On the other hand, finding such finite cores seems to be harder in practice.
Naively, one could apply the core learning approach of Algorithm~\ref{alg:learn_infinite_core} to a modified model where the number of steps is encoded into the state space, i.e.\ $\States' = \States \times \{0, \dots, n\}$.
However, this comes with a huge increase in space complexity, since we store and back-propagate $\cardinality{\States} \cdot n$ values instead of only $\cardinality{\States}$.
Nevertheless, we can efficiently approximate them by enhancing our previous approach with further observations.

\subsection{Learning a Finite Core} \label{sec:finite_cores:stability}

\begin{algorithm}[t]
	\newcommand{\pop}{pop}
	\newcommand{\len}{len}
	\caption{\textsc{LearnFiniteCore}} \label{alg:learn_finite_core}
	\begin{algorithmic}[1]
		\Require MDP $\MDP$, precision $\varepsilon > 0$, step bound $n$, upper bounds \textsc{GetBound} / \textsc{UpdateBound}, state set $\FiniteCore$ with $\initialstate \in \FiniteCore$
		\Ensure $\FiniteCore$ s.t. $\FiniteCore $ is an $n$-step $\varepsilon$-core
		\While{$\textsc{GetBound}(\initialstate, n) \geq \varepsilon$} \label{alg:learn_finite_core:loop}
			\State $\finitepath \gets \textsc{SamplePath}(\initialstate, n, \textsc{GetBound})$ \Comment Generate path of length $n$ 
			\State $\FiniteCore \gets \FiniteCore \union \finitepath$ \Comment Update Core
			\For{$i \in [n-1, n-2, \dots, 0]$} \Comment Back-propagate values
				\State $s \gets \finitepath_i$, $r \gets n - i$
				\State $\textsc{UpdateBound}\left(s, r, \max_{a \in \Actions(s)} \sum_{s' \in S} \Delta(s, a, s') \cdot \textsc{GetBound}(s', r - 1) \right)$\label{alg:learn_finite_core:update}
			\EndFor
		\EndWhile
		\State \Return $\FiniteCore$
	\end{algorithmic}
\end{algorithm}

In Algorithm~\ref{alg:learn_finite_core}, we present our learning variant for the finite-horizon case.
This algorithm is structurally very similar to the previous Algorithm~\ref{alg:learn_infinite_core}.
The fundamental difference is in Line~\ref{alg:learn_finite_core:update}, where the bounds are updated.
One key observation is that the probability of reaching some set $R$ within $k$ steps is at least as high as reaching it within $k - 1$ steps, i.e.\ $\ProbabilityMax<s>[\reach<k> R] < \varepsilon$ is non-decreasing in $k$ for any $s$ and $R \subseteq \States$.
Therefore, we can use function over-approximations to store upper bounds sparsely and avoid storing $n$ values for each state.
To allow for multiple implementations, we thus delegate the storage of upper bounds to an abstract function approximation, namely \textsc{GetBound} and \textsc{UpdateBound}.
This approximation scheme is supposed to store and retrieve the upper bound of reaching unexplored states for each state and number of \emph{remaining} steps.
We only require it to give a \emph{consistent} upper bound, i.e.\ whenever we call $\textsc{UpdateBound}(s, r, p)$, $\textsc{GetBound}(s, r')$ will return at least $p$ for all $r' \geq r$.
Moreover, we require the trivial result $\textsc{GetBound}(s, 0) = 0$ for all states $s$.
In Section~\ref{sec:finite_core:approx}, we list several possible instantiations.
\begin{thm}
	Algorithm~\ref{alg:learn_finite_core} is correct if \textsc{UpdateBound} and \textsc{GetBound} are consistent and correct w.r.t.\ the given state set $\FiniteCore$.
	Further, if \textsc{UpdateBound} stores all values precisely and \textsc{SamplePath} yields any path of length $n$ infinitely often (a.s.), the algorithm terminates (a.s.).
\end{thm}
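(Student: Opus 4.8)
The plan is to follow the same two-part template used for Algorithm~\ref{alg:learn_infinite_core} (soundness via a monotone upper-bound invariant, termination via fairness plus convergence of backups), but indexed by the remaining horizon. Throughout the execution I would maintain the invariant that, for the \emph{current} core $\FiniteCore$, the stored bounds are sound: $\textsc{GetBound}(s, r) \geq \ProbabilityMax<s>[\reach<r> \overline{\FiniteCore}]$ for every state $s$ and every $r \in \{0, \dots, n\}$, where a partially explored successor is assigned the trivial escape value $1$ and each explored state satisfies $\textsc{GetBound}(s,0) = 0$ (being inside the core, zero further steps cannot escape).

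For correctness, the base case is exactly the hypothesis that \textsc{GetBound}/\textsc{UpdateBound} are correct w.r.t.\ the initial $\FiniteCore$. I then verify that the two state-modifying operations preserve the invariant. Expanding the core, $\FiniteCore \gets \FiniteCore \union \finitepath$, only shrinks $\overline{\FiniteCore}$, so every true escape value $\ProbabilityMax<s>[\reach<r> \overline{\FiniteCore}]$ can only decrease while the stored bounds do not increase; soundness is preserved. The backup in Line~\ref{alg:learn_finite_core:update} is the heart of the argument: using the step-indexed Bellman recurrence $\ProbabilityMax<s>[\reach<r> \overline{\FiniteCore}] = \max_{a} \sum_{s'} \Delta(s,a,s') \cdot \ProbabilityMax<s'>[\reach<r-1> \overline{\FiniteCore}]$ for explored $s$ and $r \geq 1$, together with the inductive hypothesis on the successors at horizon $r-1$ and the monotonicity of \textsc{GetBound} in the remaining steps (the consistency requirement), the updated value dominates the true value. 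Hence upon leaving the loop with $\textsc{GetBound}(\initialstate, n) < \varepsilon$ we obtain $\ProbabilityMax[\reach<n> \overline{\FiniteCore}] < \varepsilon$, i.e.\ $\FiniteCore$ is an $n$-step $\varepsilon$-core.

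For termination under the stronger hypotheses (exact storage, and every length-$n$ path sampled infinitely often a.s.), I first note that a finite MDP has only finitely many paths of length $n$, so a.s.\ in finite time each has been sampled at least once; consequently every state reachable from $\initialstate$ within $n$ steps is eventually added to $\FiniteCore$. Once this holds, $\ProbabilityMax<\initialstate>[\reach<n> \overline{\FiniteCore}] = 0$, since no state outside $\FiniteCore$ is reachable within $n$ steps. It remains to show the exact bounds actually reach $0$ in \emph{finite} time, which I would prove by induction on the remaining horizon $r$: for $r=0$ the value is $0$ by assumption, and assuming all relevant $\textsc{GetBound}(\cdot, r-1)$ have attained their exact (now zero) values, the next backup through a state on a sampled path sets $\textsc{GetBound}(s, r) = \max_a \sum_{s'} \Delta(s,a,s') \cdot \textsc{GetBound}(s', r-1) = 0$; since every path is sampled infinitely often this update fires in finite time. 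Climbing from $r=0$ to $r=n$ yields $\textsc{GetBound}(\initialstate, n) = 0 < \varepsilon$ after finitely many iterations, so the loop terminates a.s.

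The main obstacle I anticipate is the bookkeeping in the backup step: reconciling the step-indexed recurrence with the treatment of frontier states. Reaching a partially explored state at \emph{any} remaining depth must count as a full escape (value $1$), so I must ensure the convention $\textsc{GetBound}(s,0)=0$ is applied only to explored states and never undercounts an escape that happens through the frontier at the very last step. The second delicate point is arguing finite-time rather than merely limiting convergence in the termination part; this is precisely what forces the layered induction on $r$ combined with the infinitely-often sampling assumption, instead of the simpler monotone-convergence argument available in the infinite-horizon case.
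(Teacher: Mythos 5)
Your proposal is correct and matches the paper's proof in essence: correctness follows because Bellman backups (and core expansion) preserve the upper-bound invariant, and termination follows because, with precise storage, the algorithm is asynchronous step-bounded value iteration whose values are driven to zero by a layered argument on the remaining horizon $r$. The only cosmetic difference is that you argue termination by direct induction on $r$, whereas the paper phrases the same argument as a contradiction via a minimal counterexample $r$.
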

\begin{proof}
	\emph{Correctness}:
	As before, the upper bound function is only updated through Bellman backups, which preserve correctness.

	\emph{Termination}:
	Given that the upper bound function stores all values precisely, the algorithm is an instance of asynchronous value iteration, which is guaranteed to converge~\cite{puterman}.
	More formally, observe that in this case we can essentially characterize \textsc{UpdateBound} and \textsc{GetBound} by a value vector $v : \States \times \{0, \dots, n\} \to [0, 1]$.
	Then, the update in Line~\ref{alg:learn_finite_core:update} corresponds to setting $v(s, r) = \max_{a \in \stateactions(s)} \sum_{s' \in \States} \transitions(s, a, s') \cdot v(s', r - 1)$ and $v(s, 0) = 0$.
	Assuming that every possible path of length $n$ is sampled infinitely often, we update every possible state-step pair infinitely often.
	Thus, assume for contradiction that there is a state $s$ and step $r$ where $v(s, r) > 0$ but there exists a path of length $n$ containing state $s$ at position $n - r$.
	W.l.o.g.\ assume that $r$ is minimal among all such state-step pairs with $v(s, r) > 0$.
	Clearly, $r > 0$ by definition of $v$.
	But, since $r$ is minimal, we have that $v(s', r - 1) = 0$ for all reachable $s'$.
	Since there exists a path reaching $s$, all of its successors are reachable and hence have a value of $0$.
	Consequently, the algorithm eventually updates $v(s, r) = 0$, contradicting the assumption.
\end{proof}

\subsection{Implementing the function approximation} \label{sec:finite_core:approx}

\begin{figure}[t]
	\centering
	\subfloat[True value]{ \label{fig:approximation_example:true}
		\begin{tikzpicture}
			\begin{axis}[xmin=0,xmax=53,ymax=1,samples=50,width=0.31\linewidth,height=3cm,
					axis x line = middle, axis y line* = middle,
					enlarge y limits=0,
					xlabel=\empty,ylabel=\empty,
					xtick={0,10,20,30,40,50},
					ytick={0,0.5,1},yticklabels={0,0.5,1}]
				\addplot[samples=50] plot coordinates {
					(0, 0)
					(18, 0)
					(20, 0.1)
					(21, 0.4)
					(22, 0.45)
					(25, 0.5)
					(40, 0.5)
					(41, 0.9)
					(45, 1)
					(53, 1)
				};
			\end{axis}
		\end{tikzpicture}
	}
	\subfloat[Simple approx.]{ \label{fig:approximation_example:simple}
		\begin{tikzpicture}
			\begin{axis}[xmin=0,xmax=53,ymax=1,samples=50,width=0.31\linewidth,height=3cm,
					axis x line = middle, axis y line* = middle,
					enlarge y limits=0,
					xlabel=\empty,ylabel=\empty,
					xtick={0,10,20,30,40,50},
					ytick={0,0.5,1},yticklabels={0,0.5,1}]
				\addplot[no marks,samples=50] plot coordinates {
					(0, 0)
					(18, 0)
					(20, 0.1)
					(21, 0.4)
					(22, 0.45)
					(25, 0.5)
					(40, 0.5)
					(41, 0.9)
					(45, 1)
					(53, 1)
				};
				\addplot[densely dashed,samples=50] plot coordinates {
					(0,  0)
					(10, 0)
					(10, 0.1)
					(20, 0.1)
					(20, 0.5)
					(30, 0.5)
					(40, 0.5)
					(40, 1)
					(53, 1)
				};
				\addplot[only marks,mark=*] plot coordinates {
					(0,  0)
					(10, 0)
					(20, 0.1)
					(30, 0.5)
					(40, 0.5)
					(50, 1)
				};
			\end{axis}
		\end{tikzpicture}
	}
	\subfloat[Adaptive approx.]{ \label{fig:approximation_example:adaptive}
		\begin{tikzpicture}
			\begin{axis}[xmin=0,xmax=53,ymax=1,samples=50,width=0.31\linewidth,height=3cm,
					axis x line = middle, axis y line* = middle,
					enlarge y limits=0,
					xlabel=\empty,ylabel=\empty,
					xtick={0,10,20,30,40,50},
					ytick={0,0.5,1},yticklabels={0,0.5,1}]
				\addplot[samples=50] plot coordinates {
					(0, 0)
					(18, 0)
					(20, 0.1)
					(21, 0.4)
					(22, 0.45)
					(25, 0.5)
					(40, 0.5)
					(41, 0.9)
					(45, 1)
					(53, 1)
				};
				\addplot[densely dashed,samples=50] plot coordinates {
					(18, 0)
					(18, 0.4)
					(21, 0.4)
					(21, 0.5)
					(40, 0.5)
					(40, 1)
					(41, 1)
					(52, 1)
				};
				\addplot[only marks,mark=*] plot coordinates {
					(18, 0)
					(21, 0.4)
					(40, 0.5)
					(45, 1)
				};
			\end{axis}
		\end{tikzpicture}
	}
	\caption{An example for different function approximation schemes which could be used to implement \textsc{UpdateBound} and \textsc{GetBound}.
		The graphs depict an arbitrarily chosen, monotonous function by a solid line and the corresponding approximation returned by the approximation scheme by a dashed line.
		From left to right, we have example bounds, which agree with the dense representation, followed by our sparse approach, which over-approximates the bounds, but requires less memory, and finally an adaptive approach, which closely resembles the precise bounds while consuming less memory.
		Dots represent the values stored by the sparse and adaptive approach.
		For readability, we do not depict the accumulating errors of the approximative methods in this figure.
	} \label{fig:approximation_example}
\end{figure}
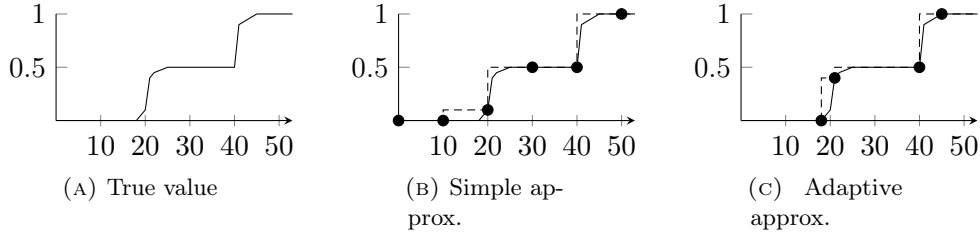

To illustrate the flexibility of our approach, we sketch several ideas for the implementation of \textsc{UpdateBound} and \textsc{GetBound} in Figure~\ref{fig:approximation_example}.
A concrete discussion and implementation of the more complex approach is left for future work.

The first, trivial implementation is dense storage, i.e.\ explicitly storing a table mapping $\States \times \{0, \dots, n - 1\} \to [0, 1]$.
This table representation consumes an unnecessary amount of memory, since we do not need exact values in order to just guide the exploration.
Hence, in our implementation, we use a simple sparse approach where we only store the value every $K$ steps, where $K$ is manually chosen.
Note that if we choose $K = 1$ we again obtain the \enquote{dense} approach.
This approach is depicted in Figure~\ref{fig:approximation_example:simple} for $K = 10$.
Every black dot represents a stored value, the dashed lines represent the value returned by \textsc{GetBound}.
We highlight that this approximation introduces accumulating errors.
This may actually prevent convergence, since these accumulated errors alone could be larger than $\varepsilon$ and thus the stopping criterion of the algorithm in Line~\ref{alg:learn_finite_core:loop} never is satisfied.
To overcome this issue, our implementation repeatedly computes the exact values for all explored states with a simple $n$-step value iteration, updating the sparsely stored value appropriately.
Note that such an update can be achieved with linear memory.

A more advanced idea is sketched in Figure~\ref{fig:approximation_example:adaptive}.
Using more sophisticated function approximation methods, we could adaptively choose which values are stored.
For example, there might be regions where the value of the function changes drastically and we should store more details there.
In the figure, this happens around 20 and 40 steps, respectively.

\subsection{Stability and its applications}

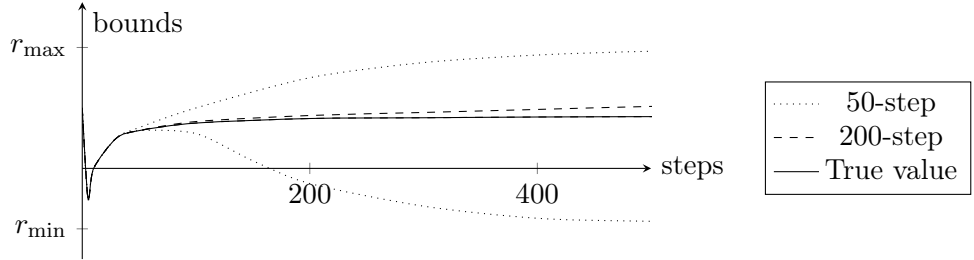
\begin{figure}[t]
	\centering
	\begin{tikzpicture}
		\begin{axis}[xmin=0,xmax=500,ymin=-6,ymax=11,samples=50,width=0.6\textwidth,height=5cm,
				axis lines = middle,
				enlarge y limits=0,
				legend style ={at={(axis cs:600,2)},anchor=west, draw=black, fill=white, align=left,
					nodes={transform shape}},
				x label style={at={(current axis.right of origin)},anchor=west},
				smooth,
				xlabel={steps},ylabel={bounds},
				xtick={0,200,400},
				ytick={-4,8},
				yticklabels={$r_{\min}$,$r_{\max}$},
				anchor=center]
			\addplot[dotted] coordinates {
				(0,4)
				(5,-2)
				(10,0)
				(30,2)
				(50,2.5)
				(100,2.3)
				(150,0.5)
				(200,-1)
				(300,-2.5)
				(400,-3.3)
				(500,-3.5)
				(800,-3.8)
				(900,-3.9)
			};
			\addplot[dotted,forget plot] coordinates {
				(0,4)
				(5,-2)
				(10,0)
				(30,2)
				(50,2.7)
				(100,4)
				(200,6)
				(300,7)
				(400,7.5)
				(500,7.75)
				(600,7.8)
				(700,7.9)
				(800,7.95)
				(900,7.995)
				(1000,8)
			};
			\addlegendentry{$50$-step};

			\addplot[dashed] coordinates {
				(0,4)
				(5,-2)
				(10,0)
				(30,2)
				(50,2.5)
				(100,3)
				(200,3.3)
				(300,3.35)
				(400,3.4)
				(500,3.425)
				(600,3.45)
				(700,3.475)
				(800,3.49)
				(900,3.5)
				(1000,3.5)
			};
			\addplot[dashed,forget plot] coordinates {
				(0,4)
				(5,-2)
				(10,0)
				(30,2)
				(50,2.5)
				(100,3.1)
				(200,3.5)
				(300,3.7)
				(400,3.9)
				(500,4.1)
				(600,4.2)
				(700,4.3)
				(800,4.4)
				(900,4.5)
				(1000,4.55)
			};
			\addlegendentry{$200$-step};

			\addplot[solid] coordinates {
				(0,4)
				(5,-2)
				(10,0)
				(30,2)
				(50,2.5)
				(100,3)
				(200,3.3)
				(300,3.35)
				(400,3.4)
				(500,3.425)
				(600,3.45)
				(700,3.475)
				(800,3.49)
				(900,3.5)
				(1000,3.5)
			};
			\addlegendentry{True value};
		\end{axis}
	\end{tikzpicture}
	\caption{
		A schematic plot for an average reward extrapolation analysis on step bounded cores.
		The solid line represents the true value, while the dotted and dashed lines are the respective upper and lower bounds computed for a $50$ and $200$-step core.
		Note that the second dashed line (lower bound on the $200$ core) coincides with the solid line (true value).
	} \label{fig:average_reward_schematic}
\end{figure}

In this section, we explain the idea of a core's \emph{stability}.
Given an $n$-step core $\FiniteCore$, we can easily compute the probability $\ProbabilityMax[\reach<N> \overline{\FiniteCore}]$ of exiting the core within $N > n$ steps using, e.g., value iteration.
The rate of increase of this exit probability intuitively gives us a measure of quality for a particular core.
Should it rapidly approach $1$ for increasing $N$, we know that the system's behaviour may change drastically within a few more steps.
If instead this probability remains small even for large $N$, we can compute properties with a large step bound on this core with tight guarantees.
We define stability as the whole function mapping the step bound $N$ to the exit probability, since this gives a more holistic view on the system's behaviour than a singular value.
This advantage becomes more apparent in the experimental evaluation.
In the following, we give an overview of how finite cores and the idea of stability can be used for analysis and interpretation, helping to design and understand complex systems.

As we have argued above, infinite-horizon properties may be deceiving, since in reality (unrecoverable) errors are bound to happen eventually.
Consequently, one might be interested in a \enquote{very large}-horizon analysis instead.
Unfortunately, such an analysis scales linearly both with the number of transitions and the horizon.
Considering that many systems have millions of states, an analysis with a horizon of only $10{,}000$ steps is far beyond the reach of existing tools.
We first explain how stable cores can be used for efficient extrapolation to such large horizons.

For simplicity, we consider reachability and later argue how to transfer this idea to other objectives.
We apply the ideas of interval iteration as used in, e.g., \cite{haddad2014reachability,atva14}, as follows.
Intuitively, since we have no knowledge of the partially explored states, we simply assume the worst and best case for them, i.e.\ assign a lower bound of $0$ and an upper bound of $1$.
Furthermore, any explored target state is assigned a lower and upper bound of $1$, as we know for sure that we reach our goal there.
By applying interval iteration we can obtain bounds on the $N$-step and even unbounded reachability.
Because of the core property, the bounds for $N \leq n$ necessarily are smaller than $\varepsilon$.
But, for larger $N$, i.e.\ $N > n$, there are no formal guarantees given by the core property---it might be the case that the core is left with probability $1$ in $n + 1$ steps.
Nevertheless, in practice this approach allows us to get good approximations even for much larger bounds.
We even observe that the computation of an $n$-step core and subsequent approximation of the desired property often is faster than directly computing the $N$-step property, as shown in our experimental evaluation.

For LTL and parity objectives, we can preprocess the obtained $n$-core by identifying the winning MECs and then applying the reachability idea.
This yields bounds on the probability of satisfying the given objective on the core.
In the case of mean-payoff, we again require lower and upper bounds on the rewards $r_{\min}$ and $r_{\max}$ of the system in order to properly initialize the unknown values.
Then, with the same approach, we can compute bounds on the $N$-step average reward by simply assigning the lower and upper bounds $r_{\min}$ and $r_{\max}$ to all unexplored states instead of $0$ and $1$.
See Figure~\ref{fig:average_reward_schematic} for a schematic plot of this analysis.
Here, the $50$-step core is too coarse for any reasonable analysis, it is unstable and can be exited with high probability.
On the other hand, the $200$-step core is very stable and accurately describes the system's behaviour for a longer period of time.
Noticeably, it also contains a MEC guaranteeing a lower bound on the average reward, hence the lower bound actually agrees with the true value.
Since the system may be significantly larger than the bounded cores or even infinitely large, this analysis potentially is much more efficient than analysis of the whole system.

Note that we cannot use this method to obtain arbitrarily precise results.
Given some $n$-step core and a particular (step bounded) property, there is a maximal precision we can achieve, depending on the property and the structure of the model.
Hence, this method primarily is useful to quickly obtain an overview of a system's behaviour instead of verifying a particular property.
As we have argued, one cannot avoid considering a particular part of the state space in order to obtain an $\varepsilon$-precise result.
Nevertheless, the smaller $n$-step core may provide valuable insights in a system, quickly giving a good overview of its behaviour or potential design flaws.
For example, an engineer could repeatedly run this analysis while formalizing or designing a system to quickly detect mistakes in the formalization or get a brief summarization of the systems performance.
Moreover, the fact that the system drastically changes its behaviour after $n$ steps may also provide valuable insights.

We recall that the presented algorithm can incrementally refine cores.
For example, if a $100$-step core does not yield a sufficiently precise extrapolation, the algorithm can reuse the computed core in order to construct a $200$-step core.
By applying this idea in an interactive loop, one can extract a condensed representation of the systems behaviour automatically, with the possibility for further refinements until the desired level of detail has been obtained.


\section{Experimental Evaluation} \label{sec:experiments}

In this section, we give practical results for our algorithms on several examples, both the hand-crafted plane model and models from case studies.
In the interest of space and readability, we hand-picked some noteworthy results from our overall evaluation.
Further details together with evaluation results on the complete PRISM benchmark suite \cite{DBLP:conf/qest/KwiatkowskaNP12}, can be found in Section~\ref{sec:appendix:evaluation}.

\subsection{Implementation Details}

We implemented our approach in Java, using PRISM 4.5 \cite{PRISM} as a library for parsing its modelling language and basic computations, verifying the correctness of our results.
Our implementation supports Markov chains, continuous-time Markov chains (CTMC, via embedding or uniformization \cite[Ch.~11.5]{puterman}) and Markov decision processes.
Further, we implemented our own version of some utility classes, e.g., explicit MDP representation and MEC decomposition. 
We point out that fine-tuning some parameters of the implementation (e.g., how often \textsc{UpdateECs} computes a full MEC decomposition) significantly impacts performance on some models.
This suggests that by investing additional effort into choosing these parameters heuristically the runtime could be improved further.

In \cite{atva14}, the authors presented several potential sampling heuristics, i.e.\ implementations of \textsc{GetPath}.
We evaluated some of the presented heuristics together with additional ones.
As reported in \cite{atva14}, it turns out that first selecting an action maximizing the expected upper bound and then selecting a successor in a weighted, guided fashion yields the best overall performance.
In particular, we sample a successor weighted by the respective upper bound, i.e.\ after selecting an action $a$ we randomly select a successor state $s'$ with probability proportional to $\upperbound(s') \cdot \transitions(s, a, s')$ or $\textsc{GetBound}(s', r) \cdot \transitions(s, a, s')$, respectively.
A detailed explanation and comparison between different sampling heuristics is presented in Section~\ref{sec:appendix:heuristics}.
In the following, we only consider the guided approach, since it consistently yielded the best performance.

\subsection{Models}

In our evaluation, we consider the following models.
\texttt{airplane} is our running example from Figure~\ref{fig:example_full} and Figure~\ref{fig:example_loop}, respectively.
All other models are taken from the PRISM benchmark suite \cite{DBLP:conf/qest/KwiatkowskaNP12}\footnote{Also available online at \url{http://www.prismmodelchecker.org/casestudies/}.}.
We briefly describe the models and how the associated parameters change them.

In \texttt{airplane}, the parameter \texttt{return} controls whether a return trip is possible and \texttt{size} quadratically influences the size of the \enquote{recovery} region.
\texttt{zeroconf} \cite{kwiatkowska2006performance} describes the IPv4 Zeroconf Protocol with \texttt{N} hosts, the number \texttt{K} of probes to send, and a probability \texttt{loss} of losing a message.
\texttt{wlan} \cite{kwiatkowska2002probabilistic} is a model of two WLAN stations in a fixed network topology sending messages on the shared medium, potentially leading to collisions.
\texttt{brp} is a DTMC modelling a file transfer of \texttt{N} chunks with bounded number \texttt{MAX} of retries per chunk.
Finally, \texttt{cyclin} is a CTMC modelling the cell cycle control in eukaryotes with \texttt{N} molecules.
We analyse this model using uniformization, converting it to a DTMC.

\subsection{Results}

We evaluated our implementation on an Intel Xeon \texttt{E5-2630} 2.20 GHz CPU, allocating one core (using \texttt{taskset}) and 8 GB of RAM to the Java process (using \texttt{-Xmx8G}).
We used a default precision of $10^{-6}$ and a timeout of 15 minutes for all experiments.
The evaluation is performed with the help of GNU \texttt{parallel} \cite{Tange2011a}.
The results for the infinite and finite construction are summarized in Table~\ref{tbl:results}.
We discuss them in the following sections.
Note that the results may vary due to the involved randomization.

\begin{table}[t]
	\caption{
		Summary of our experimental results on several models and configurations for the unbounded and step-bounded core learning.
		The \enquote{PRISM} column shows the total number of states and construction time when explored with the explicit engine.
		The following columns show the size and total construction time of a $10^{-6}$-core and a $100$-step $10^{-6}$-core, respectively.
		While building the step-bounded core for \texttt{brp}, we used the \enquote{dense} storage approach, since the simple approximation yielded unreliable performance.
	} \label{tbl:results}
	\centering
	\begin{tabular}{llrrrrrr}
		Model                                   & Param.               & \multicolumn{2}{c}{PRISM}  &  \multicolumn{2}{c}{Core}   & \multicolumn{2}{c}{$100$-Core} \\
		\toprule
		\texttt{zeroconf}                       & $100; 5; 0.1$        &        496{,}291 &    13 s &       820 &             1 s &  1{,}087 &                 1 s \\
		(\texttt{N}; \texttt{K}; \texttt{loss}) & $100; 10; 0.1$       & $3.0 \cdot 10^6$ &    77 s &       706 &             1 s &  1{,}036 &                 1 s \\
		                                        & $100; 15; 0.1$       & $4.7 \cdot 10^6$ &   120 s &       766 &             1 s &  1{,}192 &                 1 s \\
		\midrule
		\texttt{airplane}                       & $100; \texttt{ff}$   &         10{,}208 &     1 s &         6 &             0 s &        6 &                 0 s \\
		(\texttt{size}; \texttt{return})        & $100; \texttt{tt}$   &         20{,}413 &     1 s & \multicolumn{2}{c}{TIMEOUT} &       33 &                 0 s \\
		                                        & $10000; \texttt{ff}$ & \multicolumn{2}{c}{MEMOUT} &         6 &             0 s &        6 &                 0 s \\
		                                        & $10000; \texttt{tt}$ & \multicolumn{2}{c}{MEMOUT} & \multicolumn{2}{c}{TIMEOUT} &       32 &                 0 s \\
		\midrule
		\texttt{brp}                            & $20; 10$             &          2{,}933 &     0 s &   1{,}437 &             1 s &  1{,}359 &                 0 s \\
		(\texttt{N}; \texttt{MAX})              & $20; 100$            &         26{,}423 &     1 s &   1{,}442 &             1 s &  1{,}324 &                 0 s \\
		                                        & $20; 1000$           &        261{,}323 &     8 s &   1{,}437 &             1 s &  1{,}336 &                 0 s \\
		\midrule
		\texttt{wlan}                           & ---                  &        345{,}000 &   8.2 s & 345{,}000 &            74 s & 35{,}998 &                46 s \\
		\midrule
		\texttt{cyclin}                         & $4$                  &        431{,}101 &    18 s & \multicolumn{2}{c}{TIMEOUT} & 11{,}465 &                 4 s \\
		(\texttt{N})                            & $5$                  & $2.3 \cdot 10^6$ &   118 s & \multicolumn{2}{c}{TIMEOUT} & 36{,}613 &                13 s \\
		\bottomrule
	\end{tabular}
\end{table}

\subsubsection{Infinite Cores}
As already explained in \cite{atva14}, the \texttt{zeroconf} model is very well suited for this type of analysis, since a lot of the state space is hardly reachable.
In particular, most states are a result of several consecutive collisions and message losses, which is very unlikely.
Consequently, a very small part of the model already satisfies the core property.
The size of the core remains practically constant when increasing the parameter $\texttt{K}$, as only unimportant states are added to the system.
We note that the order of magnitude of explored states is very similar to the experiments from \cite{atva14}.
Similarly, in the \texttt{airplane} model a significant number of states is dedicated to recovering from an unlikely error.
Hence, a small core exists independent of the total size of the model.
The \texttt{brp} model shows applicability of the approach to Markov chains.
In line with the other results, when scaling up the maximal number of allowed errors, the size of the core changes sub-linearly, since repeated errors are increasingly unlikely.

The \texttt{airplane} model with return trip (and \texttt{cyclin} to a lesser extent) actually shows a structural weakness of our purely sampling-/VI-based approach:
Recall that the \enquote{non-recovery} region, i.e.\ the round-trip path before an error occurs, is not a MEC, however the probability of exiting is very low, namely $\tau \ll \varepsilon$ per round-trip.
This leads to two problems.
Firstly, any sampling based approach which is influenced by the transition probabilities (including our weighted approach) only rarely explores the eventually important recovery region.
Secondly, even if a path is sampled in that region the update-computation only propagates a miniscule fraction of the obtained information back to the round trip states.
Here, a hybrid approach combined with strategy iteration might be useful.
We emphasize that this not an inherent issue of the \enquote{core} idea, but rather an inherent issue of value iteration---computing a reachability property on this model using value iteration takes very long due to the latter reason, too.

\textbf{Comparison to \cite{atva14}:} We also executed the tool presented in \cite{atva14} where applicable (only MDP are supported).
We tested the tool both with an unsatisfiable property ($\reach \emptyset$), i.e.\ approximating the probability of reaching the empty set, which corresponds to constructing a core, and an actual property.
We used the \texttt{MAX\_DIFF} heuristic of \cite{atva14}, since it was suggested to be the best-performing setting.
Especially on the $\reach \emptyset$ property, our tool consistently outperformed the previous one in terms of time and memory by up to several orders of magnitude.
We suspect that this is mostly due to a more efficient implementation, especially since the number of explored states was similar.

\subsubsection{Finite Cores}

As expected, the finite core construction yields good results on the \texttt{airplane} model, constructing only a small fraction of the state space.
On the real-world models \texttt{wlan} and \texttt{cyclin}, the constructed $100$-step core is significantly smaller than the whole model.
For \texttt{wlan}, the construction of the respective cores unfortunately takes longer than building the whole model when using the \enquote{simple} approximation.
Nevertheless, model checking on the explored sub-system supposedly terminates significantly faster since only a much smaller state space is investigated, and the core can be re-used for more queries.

During our experiments, we used the \enquote{simple} approximation approach ($K = 5$) introduced in Section~\ref{sec:finite_core:approx}.
Interestingly, this approach actually yielded significant speed-ups and a smaller core on the \texttt{cyclin} model compared to using the \enquote{dense} approximation.
On the other hand, \enquote{dense} terminated much faster with a comparable core size on both \texttt{wlan} and \texttt{brp}, with a speed-up of nearly an order of magnitude.
We conjecture that this difference potentially is related to \texttt{cyclin} being a uniformized CTMC, resulting in a particular structure.

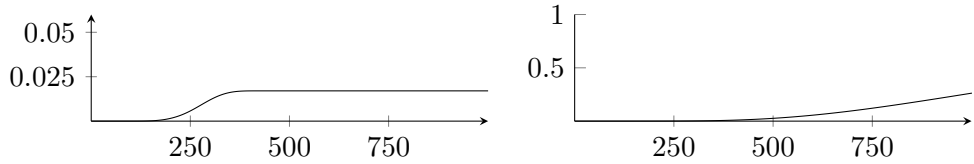
\begin{figure}[t]
	\centering
	\begin{tikzpicture}
		\begin{axis}[xmin=0,xmax=1001,ymax=0.06,samples=50,width=0.45\textwidth,height=3cm,
				axis lines = middle,
				enlarge y limits=0,
				scaled y ticks=false, yticklabel style={
					/pgf/number format/fixed,
					/pgf/number format/precision=3
				},
				xlabel=\empty,ylabel=\empty,
				xtick={250,500,750},
				ytick={0,0.025,0.05}]
			\addplot[no marks,black,-] table [x expr=\coordindex, y index=0] {data/stability.wlan.txt};
		\end{axis}
	\end{tikzpicture}
	\begin{tikzpicture}
		\begin{axis}[xmin=0,xmax=1000,ymax=1,samples=50,width=0.45\textwidth,height=3cm,
				axis x line = middle, axis y line* = middle,
				enlarge y limits=0,
				xlabel=\empty,ylabel=\empty,
				xtick={250,500,750},
				ytick={0,0.5,1}]
			\addplot[no marks,black,-] table [x expr=\coordindex, y index=0] {data/stability.cyclin.txt};
		\end{axis}
	\end{tikzpicture}
	\caption{Stability analysis of the \texttt{wlan} (left) and \texttt{cyclin}($\texttt{N}=4$) (right) 100-step core.
		The graphs show the probability of exiting the respective core within the given amount of steps.
		The $y$ axis of the \texttt{wlan} graph is scaled.
	} \label{fig:results_stability}
\end{figure}

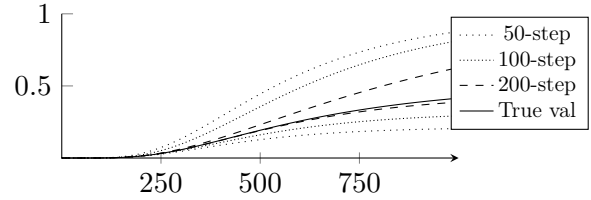
\begin{figure}[t]
	\centering
	\setlength{\tabcolsep}{5pt}
	\begin{tabular}{rrrr}
		\multirow{2}{4em}{Method} & \multicolumn{2}{c}{Time} & \multirow{2}{3em}{States} \\
		                          & Model &           Bounds &                           \\
		\hline
		                 50 steps &  0.6s &             1.2s &                   3{,}006 \\
		                100 steps &  0.8s &             1.2s &                   3{,}590 \\
		                200 steps &  1.5s &             1.5s &                   7{,}769 \\
		\hline
		                 Complete &   13s &              49s &                 431{,}101
	\end{tabular}
	\begin{tikzpicture}[baseline]
		\begin{axis}[xmin=0,xmax=1000,ymin=0.0,ymax=1.0,samples=50,width=0.45\textwidth,height=3.5cm,
				axis x line = middle, axis y line* = middle,
				enlarge y limits=0,
				legend style ={anchor=north west, draw=black, fill=white, align=left,
					nodes={scale=0.75, transform shape}},
				legend image post style={scale=0.75},
				smooth,
				xlabel=\empty,ylabel=\empty,
				xtick={0,250,500,750},
				ytick={0,0.5,1},
				anchor=center]
			\addplot[dotted] table [x expr=\coordindex, y index=0] {data/extra.50.lower.txt};
			\addplot[dotted,forget plot] table [x expr=\coordindex, y index=0] {data/extra.50.upper.txt};
			\addlegendentry{$50$-step};

			\addplot[densely dotted] table [x expr=\coordindex, y index=0] {data/extra.100.lower.txt};
			\addplot[densely dotted,forget plot] table [x expr=\coordindex, y index=0] {data/extra.100.upper.txt};
			\addlegendentry{$100$-step};

			\addplot[dashed] table [x expr=\coordindex, y index=0] {data/extra.200.lower.txt};
			\addplot[dashed,forget plot] table [x expr=\coordindex, y index=0] {data/extra.200.upper.txt};
			\addlegendentry{$200$-step};

			\addplot[solid] table [x expr=\coordindex, y index=0] {data/extra.complete.txt};
			\addlegendentry{True val};
		\end{axis}
	\end{tikzpicture}
	\caption{
		Overview of an extrapolation analysis for \texttt{cyclin}($N = 4$).
		We computed several step-bounded cores with precision $10^{-3}$.
		On these, we computed bounds of a reachability query with increasing step bound.
		The table on the left lists the time for model construction + computation of the bounds for 1000 steps and the size of the constructed model.
		The plot on the right shows the upper and lower bounds computed for each core together with the true value.
		Observe that for growing step-size of the core, the approximation naturally gets more precise.
	} \label{fig:results_extrapolation}
\end{figure}

Finally, we applied the idea of stability from Section~\ref{sec:finite_cores:stability} on the \texttt{wlan} and \texttt{cyclin} models, with results outlined in Figure~\ref{fig:results_stability}.
Interestingly, for the \texttt{wlan} model, the escape probability stabilizes at roughly $0.017$ and we obtain the exact same probability for \emph{all} heuristics (see Section~\ref{sec:appendix:heuristics}), even for $N = 10{,}000$.
This suggests that by building the $100$-step core we identified a very stable sub-system of the whole model.
Recall that the \texttt{wlan} model has roughly $3.5 \cdot 10^5$ states in total, while the identified subsystem comprises only $10\%$ of these states.
This means that most of the long term behaviour is described by only a fraction of the states.
Additionally, we observe that the crucial actions leading to the other $90\%$ of the state space happen at around 200--400 steps and the system is stable afterwards.
This information is only visible since we considered the stability function as a whole instead of a single number.

For the \texttt{cyclin} model, we instead observe a continuous rise of the exit probability.
Nevertheless, even with 500 additional steps, the core still is only exited with a probability of roughly $5.5\%$ and thus closely describes the system's behaviour.
On the \texttt{cyclin} model, we also applied our idea of extrapolation.
The results are summarized in Figure~\ref{fig:results_extrapolation}.
To show how performant this approach is, we reduced the precision of the core computation to $10^{-3}$.
Despite this coarse accuracy, we are able to compute accurate bounds on a $1000$-step reachability query\footnote{We used the (arbitrarily chosen) query \texttt{dim > CYCLIN / 4}.} over 10 times faster by only building the $200$-step core instead of constructing the full model.
In particular, we obtain the result significantly faster than the construction of the whole model.
These results suggest that our idea of using the cores for extrapolation in order to quickly gain understanding of a model has a vast potential.

\subsubsection{PRISM benchmark suite}

For readability, we briefly summarize our findings from Section~\ref{sec:appendix:results} here.
Firstly, we observe that our methods are sensitive to the structure of the model and the particular guidance heuristic used.
Weighted guidance, incorporating both computed bounds and transition probabilities, shows the best potential out of the investigated heuristics.
Moreover, the results suggest that small cores are not too common in the models of the PRISM benchmark suite.
However, we conjecture that this is mostly due to the specific structure of these models.
In particular, most of them describe abstract protocols, where probabilistic branching is only used in a few critical locations and is of a special structure; a significant part of the system size stems from non-deterministic interleaving of parallel processes.
For example, in the \texttt{firewire} protocol, randomness is only used to select either a \texttt{fast} or \texttt{slow} mode to eventually resolve ties.
This often results in rather large probabilities and thus hardly any \enquote{unimportant} states.
For such models, our method naturally is not applicable.
However, given real-world constraints, many low-probability events are introduced to the model, e.g., hardware failures, sensor noise, or transmission errors due to environmental influences.
These low probability errors allow for non-trivial cores, as is the case with, for example, the \texttt{zeroconf} or \texttt{brp} model.

\section{Conclusion}

We have presented a new framework for approximate verification of probabilistic systems via partial exploration and applied it to both Markov chains and Markov decision processes.
Our evaluation shows that, depending on the structure of the model, this approach can yield significant state space savings and thus reduction in model checking times.
Our central idea---finding relevant sub-parts of the state space---can easily be extended to further models, e.g., stochastic games, and objectives, e.g., mean payoff.
We have also shown how this idea can be transferred to the step-bounded setting and derived the notion of stability.
This in turn allows for an efficient analysis of long-run properties and strongly connected systems.

Future work includes implementing a more sophisticated function approximation for the step-bounded case, e.g., as depicted in Figure~\ref{fig:approximation_example:adaptive}.
Note that this adaptive method could yield further insight in the model by deriving points of interest, i.e.\ an interval of remaining steps where the exit probability significantly changes.
These breakpoints might indicate a significant change in the systems behaviour, e.g., the probability of some error occurring not being negligible any more, yielding interesting insights into the structure of a particular model.
For example, in the bounds of Figure~\ref{fig:approximation_example}, the regions around 20 and 40 steps, respectively, seems to be of significance.

Moreover, a sophisticated sampling heuristic to be used in \textsc{SamplePath} could be of interest.
For example, one could apply an advanced machine learning technique here, which also considers state labels or previous decisions and their outcomes.
In terms of performance, one could consider parallelizing the sampling procedures and applying rare-event detection mechanisms to reduce the number of samples needed to find a core.
Another point for performance improvements is the use of faster MEC detection algorithms.
In particular, \cite{CH11} presents an incremental MEC decomposition algorithm which is able to maintain the set of MECs of a dynamically changing MDP.
Currently, our implementation recomputes the ECs from scratch every time.

In the spirit of \cite{atva14}, our approach also could be extended to a PAC algorithm for black-box systems.
Extensions to continuous time systems are also possible.
Practically, application of our methods to models derived from biological systems or chemical reactions could yield more satisfactory results, since such systems are much more \enquote{probabilistic} than the abstract protocols considered in our evaluation.

Further interesting variations are cores for discounted objectives \cite{DBLP:conf/soda/SidfordWWY18} or \enquote{cost-bounded} cores, a set of states which is left with probability smaller than $\varepsilon$ given that at most $k$ cost is incurred.
This generalizes both the infinite (all edges have cost $0$) and the step bounded cores (all edges have cost $1$) and allows for a much wider range of analysis.


\bibliographystyle{alpha}
\bibliography{main}
\newpage

\appendix
\section{Evaluation Details} \label{sec:appendix:evaluation}

In this section, we provide further evaluation data and implementation details.

\subsection{Sampling heuristics} \label{sec:appendix:heuristics}
All of our implementations of \textsc{GetPath} sample paths originating from the initial state.
We consider two classes of sampling heuristics namely \emph{action-based} and \emph{graph-based}.

For action-based heuristics, we first select an action maximizing the expected upper bound, i.e.\ randomly choose an action from $\arg\max_{a \in \stateactions(s)} \sum_{s' \in \States} \transitions(s, a, s') \cdot \upperbound(s')$.
Once such an action $a$ is selected, we obtain a successor based on a scoring function $f(s, a, s')$, i.e.\ a state $s'$ is selected with probability proportional to $f(s, a, s')$.
The graph-based approach however essentially ignores the available actions and chooses successors directly based on a scoring function $f(s, s')$.

We considered three action-based heuristics, namely
\begin{itemize}
	\item \texttt{PROB} (\texttt{P}): $f(s, a, s') = \transitions(s, a, s')$,
	\item \texttt{WEIGHTED} (\texttt{W}): $f(s, a, s') = \upperbound(s') \cdot \transitions(s, a, s')$, and
	\item \texttt{DIFFERENCE} (\texttt{D}): $f(s, a, s') = \upperbound(s')$.
\end{itemize}
Moreover, we considered the following graph-based heuristics
\begin{itemize}
	\item \texttt{GRAPH-WEIGHTED} (\texttt{GW}): $f(s, s') = \upperbound(s') \cdot \max_{a \in \stateactions(s)} \transitions(s, a, s')$, and
	\item \texttt{GRAPH-DIFFERENCE} (\texttt{GD}): $f(s, s') = \upperbound(s')$.
\end{itemize}
In the finite-horizon case, the heuristics use $\textsc{GetBound}(s', r)$ instead of $\upperbound(s')$.
The \texttt{DIFFERENCE} heuristics corresponds to the \texttt{M-D} mode of \cite{atva14}, originally proposed in \cite{DBLP:conf/icml/McMahanLG05}.

\subsection{Setup}
We evaluated our methods on the complete PRISM benchmark set where applicable, i.e.\ all DTMC, MDP, and CTMC.
We treated all CTMCs via unifomization, determining the uniformization constant using PRISM's \texttt{getDefaultUniformisationRate()} method (which equals $1.02$ times the maximum exit rate).
For each considered model / parameter combination (215 in total), we (i)~built the complete model using PRISM's methods (ii)~built an unbounded core and (iii)~built step bounded cores for $n \in \{10, 100, 200, 500\}$, using each of our 5 heuristics, resulting in 26 executions per model and 5590 runs in total.
All cores are built with a precision requirement of $\varepsilon = 10^{-6}$.
Each execution was run with a time limit of 15 minutes and memory limit of 8 GB, bound to a single CPU core to avoid influences of potential parallelism.
Since a significant part of these models are quite large, exceeding the imposed limits by a huge margin, we encountered several time- and mem-outs.
For each evaluation, we thus describe in detail how we treated such failures when aggregating the values.

We note that the models of the PRISM benchmark suite are not too well suited for our methods overall.
We explain this issue in the following section.
Nevertheless, we evaluated our methods on this dataset, since it is one of the largest established datasets for this purpose.

\subsection{Evaluation Results} \label{sec:appendix:results}

In the first evaluation, we compare the overall performance of our methods and compare the different heuristics.
To this end, we compute for each method the average number of states, fraction of states compared to the original model, the number of failures, and the number of times the method succeeded where the complete model construction did not succeed.
The results are presented in Table~\ref{tbl:results_prism_all}.
Since a large number of these models comprise a single connected component (SCC or MEC), we separately consider all models where (i)~the complete model construction finished and (ii)~several components are found (or one not too large one), of which there are 117 in total.
The results are shown in Table~\ref{tbl:results_prism_many}.

\begin{table}
	\caption{
		Comparison of our heuristics on the complete PRISM benchmark suite (215 instances in total).
		The \enquote{Time} and \enquote{States} column are (arithmetic) averages over all instances solved by the particular method.
		\enquote{Fraction} describes the average fraction of states compared to the whole model, i.e.\ $\cardinality{\Core} / \cardinality{\States}$, considering only those instances where both the particular heuristic and the complete construction succeeded.
		\enquote{Failures} and \enquote{Success} are the percentage of models where the complete construction succeeded but the heuristic failed and vice versa.
		For example, the third line (\enquote{Unbounded \texttt{W}}) means that the \texttt{W} heuristic for unbounded cores took an average of 32 seconds to converge, resulting in 342306 states, with only 75\% of the states of the complete model built.
		Finally, on 42\% of the models this method failed but the complete construction finished in time, dually on 7\% of the models the heuristic succeeded but the complete construction did not.
	} \label{tbl:results_prism_all}
	\begin{tabular}{lrrrrr}
		                         &  Time & States & Fraction & Failures & Success \\
		\toprule
		Complete                 &   7 s & 268880 &      --- &    21 \% &     --- \\
		\midrule
		\multicolumn{6}{l}{Unbounded}                                             \\
		\hspace{1em} \texttt{W}  &  32 s & 342306 &    75 \% &    42 \% &    7 \% \\
		\hspace{1em} \texttt{P}  &  44 s & 149538 &    73 \% &    61 \% &    0 \% \\
		\hspace{1em} \texttt{D}  &  28 s & 344932 &    81 \% &    39 \% &    6 \% \\
		\hspace{1em} \texttt{GW} &  34 s & 339944 &    75 \% &    41 \% &    7 \% \\
		\hspace{1em} \texttt{GD} &  33 s & 362027 &    81 \% &    42 \% &    6 \% \\
		\midrule
		\multicolumn{6}{l}{Bounded 10}                                            \\
		\hspace{1em} \texttt{W}  &   5 s &   5579 &    15 \% &     3 \% &   21 \% \\
		\hspace{1em} \texttt{P}  &  11 s &    349 &    13 \% &    12 \% &   19 \% \\
		\hspace{1em} \texttt{D}  &   4 s &  14738 &    18 \% &     5 \% &   21 \% \\
		\hspace{1em} \texttt{GW} &   6 s &   5763 &    15 \% &     3 \% &   21 \% \\
		\hspace{1em} \texttt{GD} &   5 s &  14741 &    18 \% &     5 \% &   21 \% \\
		\multicolumn{6}{l}{Bounded 100}                                           \\
		\hspace{1em} \texttt{W}  &  39 s &  29387 &    52 \% &    29 \% &    6 \% \\
		\hspace{1em} \texttt{P}  & 127 s &   5047 &    50 \% &    49 \% &    5 \% \\
		\hspace{1em} \texttt{D}  &  31 s &  24935 &    65 \% &    33 \% &    5 \% \\
		\hspace{1em} \texttt{GW} &  34 s &  28812 &    51 \% &    29 \% &    6 \% \\
		\hspace{1em} \texttt{GD} &  22 s &  24405 &    64 \% &    33 \% &    5 \% \\
		\multicolumn{6}{l}{Bounded 200}                                           \\
		\hspace{1em} \texttt{W}  &  41 s &  22415 &    61 \% &    35 \% &    3 \% \\
		\hspace{1em} \texttt{P}  & 127 s &   7424 &    64 \% &    60 \% &    1 \% \\
		\hspace{1em} \texttt{D}  &  35 s &  24796 &    74 \% &    45 \% &    1 \% \\
		\hspace{1em} \texttt{GW} &  40 s &  22665 &    60 \% &    33 \% &    3 \% \\
		\hspace{1em} \texttt{GD} &  32 s &  23981 &    73 \% &    44 \% &    1 \% \\
		\multicolumn{6}{l}{Bounded 500}                                           \\
		\hspace{1em} \texttt{W}  &  37 s &  14926 &    66 \% &    38 \% &    3 \% \\
		\hspace{1em} \texttt{P}  &  96 s &   9334 &    74 \% &    67 \% &    0 \% \\
		\hspace{1em} \texttt{D}  &  39 s &  14723 &    77 \% &    46 \% &    1 \% \\
		\hspace{1em} \texttt{GW} &  37 s &  14084 &    65 \% &    37 \% &    3 \% \\
		\hspace{1em} \texttt{GD} &  34 s &  13983 &    76 \% &    44 \% &    1 \% \\
		\bottomrule
	\end{tabular}
\end{table}

\begin{table}
	\caption{
		Comparison of our heuristic on models of the PRISM benchmark suite which are not strongly connected (117 in total).
		We use the same metrics as in Table~\ref{tbl:results_prism_all}.
		We determine whether a model is strongly connected based on the results of the complete construction, hence only models where this construction succeeds are included.
		Consequently, the \enquote{Success} column is not relevant.
	} \label{tbl:results_prism_many}
	\begin{tabular}{lrrrr}
		                         & Time & States & Fraction & Failures \\
		\toprule
		Complete                 &  4 s & 217159 &      --- &      --- \\
		\midrule
		\multicolumn{5}{l}{Unbounded}                                  \\
		\hspace{1em} \texttt{W}  & 14 s & 121213 &    74 \% &     9 \% \\
		\hspace{1em} \texttt{P}  & 40 s &  72114 &    72 \% &    32 \% \\
		\hspace{1em} \texttt{D}  & 12 s & 122865 &    79 \% &     9 \% \\
		\hspace{1em} \texttt{GW} & 16 s & 121752 &    74 \% &    11 \% \\
		\hspace{1em} \texttt{GD} & 15 s & 123926 &    79 \% &    11 \% \\
		\midrule
		\multicolumn{5}{l}{Bounded 10}                                 \\
		\hspace{1em} \texttt{W}  &  0 s &    273 &    11 \% &     6 \% \\
		\hspace{1em} \texttt{P}  &  2 s &    244 &    10 \% &     6 \% \\
		\hspace{1em} \texttt{D}  &  0 s &    395 &    13 \% &     6 \% \\
		\hspace{1em} \texttt{GW} &  0 s &    273 &    11 \% &     6 \% \\
		\hspace{1em} \texttt{GD} &  0 s &    400 &    13 \% &     6 \% \\
		\multicolumn{5}{l}{Bounded 100}                                \\
		\hspace{1em} \texttt{W}  & 33 s &  20343 &    54 \% &    24 \% \\
		\hspace{1em} \texttt{P}  & 84 s &   7388 &    55 \% &    40 \% \\
		\hspace{1em} \texttt{D}  & 25 s &  19975 &    61 \% &    19 \% \\
		\hspace{1em} \texttt{GW} & 27 s &  19479 &    53 \% &    23 \% \\
		\hspace{1em} \texttt{GD} & 15 s &  19228 &    59 \% &    19 \% \\
		\multicolumn{5}{l}{Bounded 200}                                \\
		\hspace{1em} \texttt{W}  & 46 s &  25642 &    65 \% &    25 \% \\
		\hspace{1em} \texttt{P}  & 96 s &  10169 &    69 \% &    48 \% \\
		\hspace{1em} \texttt{D}  & 38 s &  29370 &    73 \% &    34 \% \\
		\hspace{1em} \texttt{GW} & 46 s &  25870 &    63 \% &    21 \% \\
		\hspace{1em} \texttt{GD} & 39 s &  29614 &    72 \% &    31 \% \\
		\multicolumn{5}{l}{Bounded 500}                                \\
		\hspace{1em} \texttt{W}  & 36 s &  15823 &    69 \% &    28 \% \\
		\hspace{1em} \texttt{P}  & 92 s &  10739 &    71 \% &    48 \% \\
		\hspace{1em} \texttt{D}  & 42 s &  16431 &    77 \% &    34 \% \\
		\hspace{1em} \texttt{GW} & 42 s &  15888 &    69 \% &    25 \% \\
		\hspace{1em} \texttt{GD} & 36 s &  15368 &    76 \% &    31 \% \\
		\bottomrule
	\end{tabular}
\end{table}

Overall, we see that the unguided, random sampling heuristic \textsf{P} often is significantly outperformed by the guided approaches \textsf{W} and \textsf{D} in terms of runtime.
Note that both the \enquote{Time} and \enquote{States} column only shows averages over all instances where the particular method succeeded, hence the averages size of \textsf{P}-cores seem lower than the others'.
When comparing the instances where all heuristics succeed, we see that the average sizes are very similar.
We also observe that the size and construction time (and failure rates) of step-bounded does not significantly increase when going from $200$ to $500$ steps.
This suggests that most of the considered models where the construction succeeds are quite \enquote{stable} after $200$ steps---otherwise, either the average size of the computed $500$-step cores or the number of timeouts would be much larger.
Finally, we see that the size of the identified cores are mostly independent of the used method (accounting for the difference in the averages due to timeouts).
When repeating the experiments, we further learned that the size of the identified cores usually deviate only by a few percent.
This suggests that the performance of our approach is quite stable, despite the large amount of involved randomization.

\begin{table}
	\caption{Comparison of our heuristics on each of the model groups of the PRISM benchmark suite.
		In the first four columns, we report the number of instances, the average number of states (of the models where the complete construction succeeded), the absolute number of failures of the complete construction and the average number of components of the model instances (MECs or SCCs, respectively).
		For each heuristic, we then report the average fraction of states, compared to the complete construction (the previous tables' \enquote{Fraction} column).
		We omit the \% sign to save space.
		For simplicity, whenever a heuristic failed, we simply assumed a ratio of $100 \%$, independent of the actual outcome of the complete construction.
		Further, we report the number of times where the heuristic approach succeeded over the complete construction (the previous tables' \enquote{Success} column).
		For readability, we sorted the models by the average state-reduction achieved by our methods.
	} \label{tbl:results_prism_by_model}
	\begin{tabular}{lrrrrrrrrrrrrrr}
		Model         & \# & States &  Comp. & Fail & \multicolumn{2}{c}{\texttt{W}} & \multicolumn{2}{c}{\texttt{P}} & \multicolumn{2}{c}{\texttt{D}} & \multicolumn{2}{c}{\texttt{GW}} & \multicolumn{2}{c}{\texttt{GD}} \\
		\toprule
		zeroconf      & 16 & 383919 &   5476 &    0 &  12 &                        0 &  11 &                        0 &  15 &                        0 &  13 &                         0 &  15 &                         0 \\
		zeroconf\_dl  & 10 &  91996 &   3820 &    0 &  23 &                        0 &  19 &                        0 &  29 &                        0 &  26 &                         0 &  34 &                         0 \\
		embedded      &  7 &   6013 &   4529 &    0 &  52 &                        0 & 100 &                        0 & 100 &                        0 &  52 &                         0 & 100 &                         0 \\
		wlan          &  7 & 969161 &      1 &    0 &  77 &                        0 & 100 &                        0 &  77 &                        0 &  77 &                         0 &  77 &                         0 \\
		nand          & 10 & 193247 & 193247 &    6 &  75 &                        6 & 100 &                        0 &  96 &                        4 &  75 &                         6 &  96 &                         4 \\
		brp           & 12 &   2302 &   2302 &    0 &  97 &                        0 &  96 &                        0 & 100 &                        0 &  97 &                         0 & 100 &                         0 \\
		\midrule
		wlan\_dl      &  7 & 596250 &  11671 &    4 & 100 &                        4 & 100 &                        0 & 100 &                        4 & 100 &                         4 & 100 &                         4 \\
		crowds        & 16 &  87287 &  56140 &    4 & 100 &                        3 & 100 &                        0 & 100 &                        3 & 100 &                         3 & 100 &                         3 \\
		coin          &  6 &  11616 &     27 &    0 & 100 &                        0 & 100 &                        0 & 100 &                        0 & 100 &                         0 & 100 &                         0 \\
		csma          &  9 & 389136 &     11 &    3 & 100 &                        0 & 100 &                        0 & 100 &                        0 & 100 &                         0 & 100 &                         0 \\
		egl           & 16 &  95230 &  95230 &   12 & 100 &                        0 & 100 &                        0 & 100 &                        0 & 100 &                         0 & 100 &                         0 \\
		firewire      & 20 & 342520 &   1214 &    4 & 100 &                        1 & 100 &                        1 & 100 &                        1 & 100 &                         1 & 100 &                         1 \\
		herman        &  7 &   6241 &      5 &    0 & 100 &                        0 & 100 &                        0 & 100 &                        0 & 100 &                         0 & 100 &                         0 \\
		leader\_sync  &  9 &    758 &    633 &    0 & 100 &                        0 & 100 &                        0 & 100 &                        0 & 100 &                         0 & 100 &                         0 \\
		\midrule
		cluster       &  9 & 396800 &      1 &    1 & 100 &                        0 & 100 &                        0 & 100 &                        0 & 100 &                         0 & 100 &                         0 \\
		fms           & 10 & 339031 &      1 &    3 & 100 &                        0 & 100 &                        0 & 100 &                        0 & 100 &                         0 & 100 &                         0 \\
		kanban        &  7 & 612813 &      1 &    2 & 100 &                        0 & 100 &                        0 & 100 &                        0 & 100 &                         0 & 100 &                         0 \\
		mapk\_cascade &  8 & 316918 &      1 &    2 & 100 &                        0 & 100 &                        0 & 100 &                        0 & 100 &                         0 & 100 &                         0 \\
		poll          & 18 & 419430 &      1 &    3 & 100 &                        0 & 100 &                        0 & 100 &                        0 & 100 &                         0 & 100 &                         0 \\
		tandem        & 11 & 310465 &      1 &    2 & 100 &                        0 & 100 &                        0 & 100 &                        0 & 100 &                         0 & 100 &                         0 \\
		\bottomrule
	\end{tabular}
\end{table}

Since the benchmark set contains many vastly different models, both in terms of structure and size, we further report results grouped by the different model types for the unbounded core construction in Table~\ref{tbl:results_prism_by_model}.
As already expected, we see that our methods are highly dependent on the model structure.
Moreover, we see that on all models except \texttt{zeroconf}, the weighted approaches \texttt{W} and \texttt{GW} obtain the smallest core (or one of equal size).
In particular, on the \enquote{embedded} model, \texttt{W} and \texttt{GW} obtain a 50\% reduction while the other approaches do not identify a non-trivial core.
This suggests that out of the considered heuristics, the weighted approach \texttt{W} / \texttt{GW} is the most efficient one on average (in terms of size).
These specific differences also suggest that our methods are sensitive to the particular heuristic, and developing further, more sophisticated methods can significantly improve performance.

While these results suggest that cores seem to be relatively rare in practice, we point out that this is mostly due to the specific structure of the benchmarks in the PRISM benchmark suite.
Some models are mostly non-deterministic, containing only a few probabilistic branches with \enquote{large} probabilities.
For example, \enquote{csma} describes a shared bus where the only source of probabilistic branching is choosing a random back-off delay after a collision---the smallest involved probability is in the order of $1/10$th.
Most models in the middle part of the table are of this structure.
On other classes, the parameters of the benchmark suite simply are not chosen large enough in order to obtain \enquote{unimportant} states.
This is the case with the \enquote{brp} model, where scaling up the maximal number of retransmissions can drastically improve the savings of our core approach, as we have shown in Section~\ref{sec:experiments}.
Finally, many models are infinitely repeating protocols and thus are strongly connected.
In light of Proposition~\ref{stm:reachable_mecs_in_core}, we cannot expect any improvement there.
All models of this kind are located in the bottom group of the table.

In general, most of the models in the benchmark suite describe abstract protocols.
For these, probabilistic branching is only present in few critical locations and is of a particular structure---mostly, randomness is used to, e.g., resolve ties or similar, resulting in rather large probabilities.
Here, our method is, by nature, not applicable.
However, as soon as real-life constraints are incorporated, many low-probability events are introduced to the model, for example hardware failures, sensor noise, or transmission errors due to environmental influences.
These low probability errors allow for non-trivial cores, as is the case with, for example, the \texttt{zeroconf} model.

\end{document}